\let\NAT@parse\undefined
\newenvironment{manuallemma}[1]{%
  \manuallemmainner
}{\endmanuallemmainner}
\newenvironment{manualtheorem}[1]{%
  \manualtheoreminner
}{\endmanualtheoreminner}
\theoremstyle{definition}
\newtheorem{defn}{Definition}
\theoremstyle{definition}
\newtheorem{assum}{Assumption}
\newtheorem{lem}{Lemma}
\newtheorem{thm}{Theorem}
\theoremstyle{remark}
	\title{\Large Safety aware model-based reinforcement learning for optimal control of a class of output-feedback nonlinear systems}
\author{\normalsize S M Nahid Mahmud$^{1}$ \and Moad Abudia$^{1}$ \and Scott A Nivison$^{2}$ \and Zachary I. Bell$^{2}$ \and Rushikesh Kamalapurkar$^{1}$
\thanks{*This research was supported, in part, by the Air Force Research Laboratories under award number FA8651-19-2-0009. Any opinions, findings, or recommendations in this article are those of the author(s), and do not necessarily reflect the views of the sponsoring agencies.}%
\thanks{$^{1}$School of Mechanical and Aerospace Engineering, Oklahoma State University, email: {\tt\footnotesize \{nahid.mahmud, abudia@okstate.edu, rushikesh.kamalapurkar\} @okstate.edu}.}%
\thanks{$^{2}$ Air Force Research Laboratories, Florida, USA, email: {
\tt\footnotesize \{scott.nivison, zachary.bell.10\}
 @us.af.mil.}}}
\begin{document}
\maketitle
\thispagestyle{plain}
\pagestyle{plain}
\begin{abstract} 
\normalsize The ability to learn and execute optimal control policies safely is critical to realization of complex autonomy, especially where task restarts are not available and/or the systems are safety-critical. Safety requirements are often expressed in terms of state and/or control constraints. Methods such as barrier transformation and control barrier functions have been successfully used, in conjunction with model-based reinforcement learning, for safe learning in systems under state constraints, to learn the optimal control policy. However, existing barrier-based safe learning methods rely on full state feedback. In this paper, an output-feedback safe model-based reinforcement learning technique is developed that utilizes a novel dynamic state estimator to implement simultaneous learning and control for a class of safety-critical systems with partially observable state. 
\end{abstract}
\section{Introduction}
Deployment of unmanned autonomous systems in complex, dull, dirty, and dangerous tasks has seen a stark increase over the past decade due to advantages such as repeatability, accuracy, and lack of physical fatigue. To realize complex autonomy, techniques that allow autonomous agents to learn to perform tasks, in a provably safe manner, are needed. While recent years have seen prolific progress in the area of model-based safe reinforcement learning (RL) in deterministic systems
\cite{SCC.Cohen.ea2020,SCC.Yang.Vamvoudakis.ea2019,SCC.Greene.Deptula.ea2020,SCC.Mahmud.Nivison.eatoappear}, most existing techniques for model-based safe RL in deterministic systems require full state feedback. This paper focuses on the development of a model-based RL (MBRL) framework, for a class of nonlinear systems, in continuous time, under output feedback, while guaranteeing stability and safety.

RL is a trial-and-error learning process in which an agent interacts with the environment by taking an action according to a policy. The action changes the state of the agent-environment system based on the state transition model of the system. The agent receives a real-valued reward based on the current system state, the action, and the system state resulting from the action. The agent observes the rewards and adjusts its policy to maximize the cumulative reward\cite{SCC.Sutton.Barto1998,SCC.Doya2000}. While RL has been applied to solve a variety of difficult control problems, many trials are typically needed before a task is learned, and safety during the learning process is difficult to guarantee\cite{SCC.Heger.ea1994}.  Over the past few decades, numerous model-free \cite{SCC.Garcia.Fernadez.ea2015,SCC.Mihatsch.Neuneier.ea2002,SCC.Moldovan.Abbeel.ea2012,SCC.Wachi.Sui.ea2020} and model-based \cite{SCC.Meyn.Tweedie1992,SCC.Puterman2014,SCC.Cohen.ea2020,SCC.Yang.Vamvoudakis.ea2019,SCC.Greene.Deptula.ea2020,SCC.Mahmud.Nivison.eatoappear,SCC.Fisac.Akametalu.ea2018,SCC.Li.Kalabi.ea2018,SCC.Yutong.Nan.ea2021} safe RL methods have been developed for online optimal policy synthesis. Motivated by the fact that model-based methods have been shown to improve sample efficiency \cite{SCC.Lee.Park.ea2012,SCC.Modares.Lewis2014,SCC.Yang.Liu.ea2015,SCC.Wawrzynski2009,SCC.Zhang.Cui.ea2011,SCC.Adam.Busoniu.ea2012,SCC.Luo.Wu.ea2014} and have been used to relax exploration requirements such as persistence of excitation \cite{SCC.Kamalapurkar.Dinh.ea2015,SCC.Kamalapurkar.Walters.ea2016a,SCC.Kamalapurkar.Rosenfeld.ea2016,SCC.Kamalapurkar.Andrews.ea2017,SCC.Kamalapurkar.Klotz.ea2018}, this paper focuses on safe MBRL (SMBRL), where safety requirements are expressed in terms of state and/or control constraints \cite{SCC.He.Li.ea2017}.

SMBRL methods in Markov decision process (MDP) models are generally formulated in discrete time, and for systems with finite state and action spaces \cite{SCC.Howard1960,SCC.Meyn.Tweedie1992,SCC.Puterman2014}. SMBRL methods in the context of continuous-time stochastic systems are developed in  \cite{SCC.Fisac.Akametalu.ea2018,SCC.Li.Kalabi.ea2018,SCC.Yutong.Nan.ea2021}, these are offline, episodic methods that result in probabilistic safety guarantees. While probabilistic safety guarantees are common in the robotics community, applications such as manned aviation often demand deterministic safety guarantees and online, real-time learning. The deterministic safety problem  has recently been studied in the context of SMBRL for continuous-time systems in \cite{SCC.Cohen.ea2020,SCC.Yang.Vamvoudakis.ea2019,SCC.Greene.Deptula.ea2020,SCC.Mahmud.Nivison.eatoappear}. In \cite{SCC.Cohen.ea2020}, online, real-time implementation of safe RL is enabled by casting the proximity penalty approach developed in \cite{SCC.Walters.Kamalapurkar.ea2015,SCC.Deptula.Rosenfeld.ea2018,SCC.Deptula.ea2020}, into the framework of control barrier functions\cite{SCC.Ames.Xu.ea2017} (CBFs). While the CBF approach results in safety guarantees, the existence of a smooth value function, in spite of a nonsmooth cost function, needs to be assumed. Furthermore, to facilitate parametric approximation of the value function, the existence of a forward invariant compact set in the interior of the safe set needs to be established. Since the invariant set needs to be in the interior of the safe set, the CBF is not needed to establish the invariant set. As such, the penalty and the CBF, while practically useful, become theoretically superfluous.

This paper is inspired by the barrier transformation (BT) approach to state-constrained optimal control, introduced in \cite{SCC.Yang.Vamvoudakis.ea2019}. The BT approach utilizes the transformation introduced in \cite{SCC.Graichen.Petit.ea2009}, along with exact model knowledge, to transform the state constrained optimal control problem into an equivalent, unconstrained optimization problem. The unconstrained problem is then solved using adaptive optimal control methods under persistence of excitation. In \cite{SCC.Greene.Deptula.ea2020}, the results in \cite{SCC.Yang.Vamvoudakis.ea2019} are extended to soften the restrictive persistence of excitation requirement using a MBRL formulation. In \cite{SCC.Mahmud.Nivison.eatoappear}, the results in \cite{SCC.Greene.Deptula.ea2020} are extended to yield a MBRL solution to the online state-constrained optimal feedback control problem under parametric uncertainty using a filtered concurrent learning technique. While BT approaches in results such as \cite{SCC.Yang.Vamvoudakis.ea2019, SCC.Greene.Deptula.ea2020,SCC.Mahmud.Nivison.eatoappear} yield verifiable safe feedback controllers, they require full state feedback. The focus of this paper is online solution of state-constrained optimal control problems, under output feedback, for nonlinear control-affine systems in Brunovsky canonical form. While estimates of the system state can be obtained by numerically differentiating the output, to avoid noise amplification \cite{SCC.Ahnert.Abel.ea2007}, state estimators/observers are typically preferred \cite{SCC.Ghanes.Barbot.ea2017,SCC.Dinh.Kamalapurkar.ea2014,SCC.Dabroom.Khalil.ea1997,SCC.Chitour.ea2002}.
In this paper, a novel state estimator that integrates with the BT is designed to enable output feedback SMBRL (OF-SMBRL).

While partially observable systems have long been a focus of study in RL  \cite{SCC.Jaakkola.Singh.ea1995,SCC.Lewis.Vamvoudakis.ea2011,SCC.Modares.Lewis.ea2014,SCC.Omidshafiei.Pazis.ea2017}, output feedback RL results for nonlinear, continuous time systems are scarce. A method for output feedback RL in nonlinear continuous time systems in the Brunovsky canonical form is developed in the authors' preliminary work in \cite{SCC.Self.Harlan.ea2019} for \emph{unconstrained} optimal control. However, to the best of the authors' knowledge, online RL solutions to safety-constrained optimal control problems in partially observable nonlinear continuous-time systems are not available in the literature. In this paper, a novel technique is developed by incorporating a newly designed state estimator with a BT to learn control policies online, for a class of partially observable nonlinear systems, while maintaining safety and stability during the learning and execution phases. 

The following section describes the problem, and control objective.
\begin{figure}
        \centering
		\includegraphics[width=0.75\columnwidth]{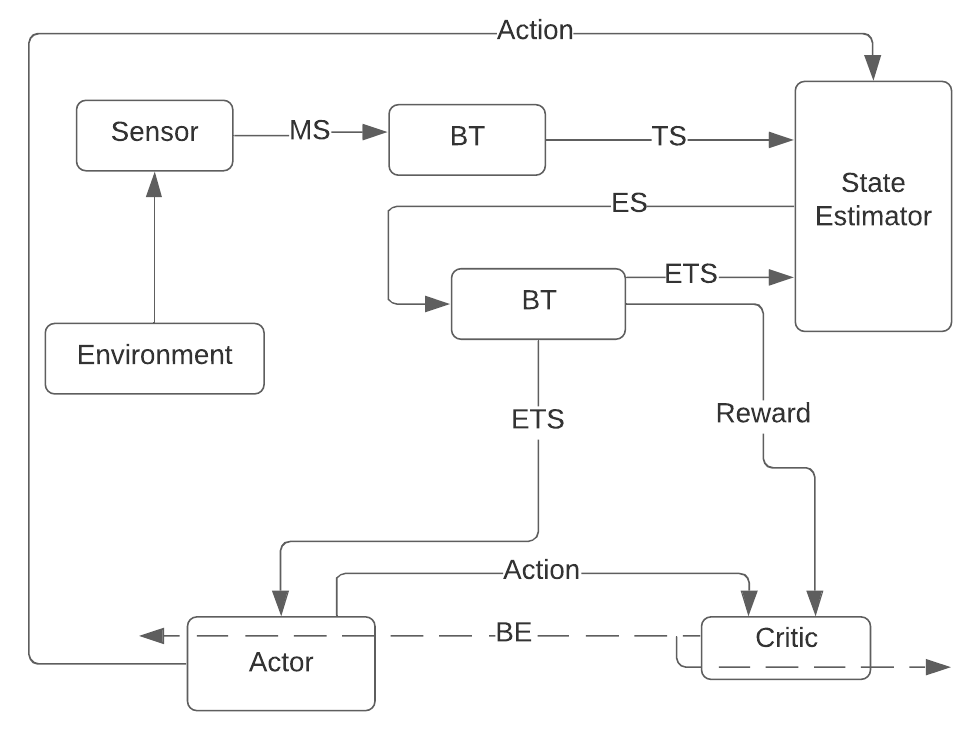}
		\caption{ Developed OF-SMBRL framework that uses a simulation-based BT-actor-critic-estimator architecture. The critic utilizes estimated transformed state variables, actions, and the corresponding estimated transformed state-derivatives to learn the value function. In the figure, BT: Barrier Transformation; MS: Measured State; TS: Transformed State; ES: Estimated State; ETS: Estimated Transformed State; BE: Bellman Error.}
		\label{fig:TNNS}
\end{figure}
\section{Problem Formulation} \label{control object}
We consider the following continuous-time affine nonlinear dynamical system in Brunovsky canonical form. 
\begin{align}
    \dot x_1 = x_2, \quad
    \dot x_2 = f(x)+g(x)u, \label{eq:Dynamics}
\end{align}
where
$x \coloneqq [x_{1};x_{2}] \in \mathbb{R}^{2n}$ is the system state, $u \in \mathbb{R}^{m}$ is the control input, and $x_{1} \in \mathbb{R}^{n}$ is the output. The drift dynamics, $f: \mathbb{R}^{2n} \rightarrow \mathbb{R}^{n}$, and control effectiveness, $g: \mathbb{R}^{2n} \rightarrow \mathbb{R}^{n \times m}$, are known, locally Lipschitz continuous functions. Let $\hat x_{1}$ and $\hat x_{2}$
be the estimates of $x_{1}$ and $x_{2}$ respectively and $\hat{x} \coloneqq [\hat{x}_{1};\hat{x}_{2}]$. The notation $[v;w]$ is used to denote the vector $[v^T \quad w^T]^{T}$, and the notation $z_{i_{j}}$ is used to denote the $j$th element of the vector $z_{i}$. The notation $I_{o}$ denotes the identity matrix of size $o$.

The objective is to design an adaptive estimator to estimate the state online, using input-output measurements, and to simultaneously estimate and utilize an output feedback optimal controller, $u$, such that starting from a given feasible initial condition $x^{0}$, the trajectories $x(\cdot)$ decay to the origin and satisfy $x_{i_{j}}(t) \in (a_{i_{j}},A_{i_{j}})$, $\forall t\geq 0$, where $ i = 1,2$, $j = 1,2, \hdots, n$, and the constants $ a_{i_{j}} < 0 < A_{i_{j}}$ define the safety constraints.

To achieve the stated objective, an OF-SMBRL framework (Fig. \ref{fig:TNNS}) is developed in this paper. A BT, inspired by \cite{SCC.Yang.Vamvoudakis.ea2019}, is used to transform the system dynamics to a barrier coordinate system. Since the transformed system is not in Brunovsky cannonical form, state estimators such as \cite{SCC.Dinh.Kamalapurkar.ea2014,SCC.Self.Harlan.ea2019} can not be utilized to estimate the transformed state variables. This paper develops a novel state estimator that leverages the Burnovsky form of the original system to yield convergence and boundedness of the transformed state estimates. The state estimator is integrated with the SMBRL framework to learn feedback control policies resulting in safe approximately optimal trajectories that satisfy the desired constraints.



In the following, Section \ref{Barrier Transformation} introduces the BT. Section \ref{sec:Velocity-estimator-designCH3} and Section \ref{Stability Analysis for the state estimation} detail the novel state estimator and an analysis of bounds on the resulting state estimation errors, respectively. Section \ref{Model-Based Reinforcement Learning} describes the novel SMBRL technique for synthesizing feedback control policies in transformed coordinates. In Section \ref{sec:Analysis}, a Lypaunov-based analysis, in the transformed coordinates, is utilized to establish practical stability of the closed-loop system resulting from the developed SMBRL technique. Guarantees that the safety requirements are satisfied in the original coordinates are also established. Simulation results in Section \ref{Simulation} compare the performance of the developed SMBRL approach to an offline pseudospectral optimal control method.

\section{Barrier Transformation}\label{Barrier Transformation}
This section introduces the barrier transformation and formalizes the connection between trajectories of the transformed system and the original system.
\begin{defn}
Given $\underline{a}<0<\overline{a}$, The function $b_{(\underline{a},\overline{a})} : \mathbb{R} \rightarrow \mathbb{R}$, referred to as the barrier function (BF), is defined as $ b_{(\underline{a},\overline{a})}(y) \coloneqq \log \left(\frac{\overline{a}(\underline{a} - y)}{\underline{a}(\overline{a} - y)}\right)$.
\end{defn}
The inverse of the BF on the interval $(\underline{a},\overline{a})$ containing the origin is given by $ b^{-1}_{(\underline{a},\overline{a})}(y) = \left(\underline{a}\overline{a}\frac{e^{y} - 1}{\underline{a}e^{y} - \overline{a}}\right) $. For $i\in\{1,2\}$ and $j\in \{1,\hdots, n\}$, the barrier transformation is a nonlinear coordinate transformation, defined as $s_{i_{j}} \coloneqq b_{(a_{i_{j}},A_{i_{j}})}(x_{i_{j}}), \quad \text{and} \quad x_{i_{j}} = b^{-1}_{(a_{i_{j}},A_{i_{j}})}(s_{i_{j}}).$
Evaluating the derivative of the inverse of the barrier function at $s_{i_{j}}$ yields $\frac{\mathrm{d}b^{-1}_{(a_{i_{j}},A_{i_{j}})}(s_{i_{j}})}{\mathrm{d}s_{i_{j}}} = \frac{1}{B_{i_j}(s_{i_{j}})}$, where $B_{i_j}(s_{i_{j}}) \coloneqq \frac{a_{i_{j}}^{2} e^{s_{i_{j}}} - 2a_{i_{j}} A_{i_{j}} + A_{i_{j}}^{2} e^{-s_{i_{j}}}}{A_{i_{j}}a_{i_{j}}^{2} - a_{i_{j}} A_{i_{j}}^{2}}.$ Let $B(s) \coloneqq [B_{1_{1}}(s_{1_{1}}); \hdots ; B_{2_{n}}(s_{2_{n}})]$.

Let $a = [a_{1};a_{2}]$, with $a_{1} = [a_{1_1}; \hdots ;a_{1_n}]$, $a_{2} = [a_{2_1}; \hdots ;a_{2_n}]$, and $A = [A_{1}; A_{2}]$, with $A_{1} = [A_{1_1}; \hdots ;A_{1_n}]$ and $A_{2} = [A_{2_1}; \hdots ;A_{2_n}]$. Let the notation $(y)_i$ be used to denote the $i$th element of any vector $y$. In the following, for any vector $y$, comprised of components of $x$, such that $y = [(x)_p;\hdots;(x)_q]$, with $1\leq p \leq q \leq 2n$, the notation $b(y)$ is used to denote componentwise application of the barrier function with the appropriate limits selected componentwise from the vectors $a$ and $A$. That is, $b(y) \coloneqq [b_{((a)_p,(A)_p)}((x)_p); \hdots ; b_{((a)_q,(A)_q)}((x)_q)]$. Similarly, given any vector $y$, comprised of components of $s$, such that $y = [(s)_p,\cdots,(s)_q]^T$, with $1\leq p \leq q \leq 2n$, $b^{-1}(y) \coloneqq [b^{-1}_{((a)_p,(A)_p)}((s)_p); \hdots ; b^{-1}_{((a)_q,(A)_q)}((s)_q)]^T$.

To transform the dynamics in \eqref{eq:Dynamics} using the BT, the time derivative of the transformed state, $s \coloneqq [s_{1};s_{2}] \in \mathbb{R}^{2n}$, can be computed as
\begin{align}\label{eq:BTDynamics}
    \dot{s}_{1} =  H(s),\quad \dot{s}_{2} = F(s) + G(s)u,
\end{align}
where $ \left(H(s)\right)_j \coloneqq B_{1_j}\left(s_{1_j}\right)b^{-1}(s_{2_{j}})$,
$\left(F(s)\right)_j \coloneqq B_{2_j} \left(s_{2_j}\right) \left(f(b^{-1}(s))\right)_j,$
and \\
$ \left(G(s)\right)_j  \coloneqq B_{2_j} \left(s_{2_j}\right) \left(g(b^{-1}(s))\right)_j.$

\subsection{Analysis of trajectories}
In the following lemma, the trajectories of the original system and the transformed system are shown to be related by the barrier transformation provided the trajectories of the transformed system are \emph{complete} \cite[page 33]{SCC.Sanfelice2021}. The completeness condition is not vacuous, it is not difficult to construct a system where the transformed trajectories escape to infinity in finite time, while the original trajectories are complete. For example, consider the system $ \dot{x} = x + x^2u $ with $x\in\mathbb{R}$ and $ u\in\mathbb{R} $. All nonzero solutions of the corresponding transformed system $ \dot{s} = B(s)b^{-1}_{(-0.5,0.5)}(s) + B(s) \left(b^{-1}_{(-0.5,0.5)}(s)\right)^2 u $ under the feedback $ \zeta(s,t) = -b^{-1}_{(-0.5,0.5)}(s) $ escape in finite time. However, all nonzero solutions of the original system under the feedback $\xi(x,t) = \zeta(b_{(-0.5,0.5)}(x),t) = -x $ converge to either $ -1 $ or $ 1 $.

\begin{lem}\label{lem:trajectoryRelation}If $t \mapsto \Phi\big(t,b(x^{0}),\zeta\big)$ is a complete Carath\'{e}odory solution to \eqref{eq:BTDynamics}, starting from the initial condition $b(x^{0})$, under the feedback policy $(s,t) \mapsto \zeta (s,t)$ and $t \mapsto \Lambda(t,x^{0},\xi)$ is a Carath\'{e}odory solution to \eqref{eq:Dynamics}, starting from the initial condition $x^{0}$, under the feedback policy $(x,t) \mapsto \xi(x,t)$, defined as $\xi(x,t) = \zeta(b(x),t)$, then $t \mapsto \Lambda(t,x^{0}, \xi)$ is complete and $ \Lambda(t,x^{0}, \xi) = b^{-1}\left(\Phi(t,b(x^{0}),\zeta)\right) $ for all $t \in \mathbb{R}_{\geq 0}$.
\end{lem}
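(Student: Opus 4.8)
The plan is to treat the barrier transformation $b$ as a $C^{1}$ diffeomorphism from the open box $\Omega \coloneqq \prod_{i,j}(a_{i_{j}},A_{i_{j}})$ onto $\mathbb{R}^{2n}$, whose inverse $b^{-1}$ is defined on \emph{all} of $\mathbb{R}^{2n}$ and satisfies $\frac{\mathrm{d}}{\mathrm{d}s_{i_{j}}}b^{-1}(s_{i_{j}}) = 1/B_{i_{j}}(s_{i_{j}})$ with $B_{i_{j}}>0$ everywhere (positivity follows by inspecting the numerator $a_{i_{j}}^{2}e^{s}+A_{i_{j}}^{2}e^{-s}-2a_{i_{j}}A_{i_{j}}$ and the denominator, using $a_{i_{j}}<0<A_{i_{j}}$). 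I would then define the candidate curve $\psi(t) \coloneqq b^{-1}\big(\Phi(t,b(x^{0}),\zeta)\big)$. Because $\Phi$ is assumed complete, $\Phi(t)$ is a finite point of $\mathbb{R}^{2n}$ for every $t\geq 0$, so $\psi(t)$ is defined for all $t\geq 0$ and lies in the interior $\Omega$; moreover $\psi(0)=b^{-1}(b(x^{0}))=x^{0}$.

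The core step is to verify that $\psi$ is a Carath\'{e}odory solution of \eqref{eq:Dynamics} under $\xi$. Since $\Phi$ is locally absolutely continuous and $b^{-1}$ is $C^{1}$, the composition $\psi$ is locally absolutely continuous and the chain rule holds for almost every $t$, giving $\dot\psi_{i_{j}}(t) = \dot\Phi_{i_{j}}(t)/B_{i_{j}}(\Phi_{i_{j}}(t))$. Substituting the transformed dynamics \eqref{eq:BTDynamics} and using the explicit forms of $H$, $F$, and $G$, the factors $B_{1_{j}}$ and $B_{2_{j}}$ cancel by construction, yielding $\dot\psi_{1}=\psi_{2}$ and $\dot\psi_{2}=f(\psi)+g(\psi)\zeta(b(\psi),t)$. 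Since $b(\psi)=\Phi$ and $\xi(\psi,t)=\zeta(b(\psi),t)$ by definition, this is exactly \eqref{eq:Dynamics} driven by $\xi$, so $\psi$ is a \emph{complete} Carath\'{e}odory solution of the original closed loop from $x^{0}$.

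It remains to identify $\psi$ with $\Lambda$, which I would do by uniqueness. The closed-loop right-hand side $x\mapsto [x_{2};f(x)+g(x)\zeta(b(x),t)]$ is locally Lipschitz in $x$ on $\Omega$ whenever $\zeta$ is locally Lipschitz in $s$ (as it is for the smooth approximate policies synthesized later), since $f$, $g$ are locally Lipschitz and $b$ is smooth; hence Carath\'{e}odory solutions of \eqref{eq:Dynamics} under $\xi$ from $x^{0}$ are unique. Thus $\Lambda$ coincides with $\psi$ on the domain of $\Lambda$, and because $\psi$ is defined on all of $\mathbb{R}_{\geq 0}$, $\Lambda$ extends to and equals $\psi=b^{-1}(\Phi)$ there, establishing both the completeness of $\Lambda$ and the stated identity.

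The delicate point, and the reason the completeness hypothesis on $\Phi$ cannot be dropped, is precisely this transfer of the time domain: the map $b$ does \emph{not} preserve completeness, as the scalar example preceding the lemma shows, so one genuinely needs $\Phi$ to live on $[0,\infty)$ before pushing it back through $b^{-1}$. I therefore expect the main obstacle to be the bookkeeping around maximal intervals of existence---arguing that a solution $\Lambda$ whose image under $b$ solves the transformed equation cannot terminate in finite time, since its limit $b^{-1}(\Phi(T))$ would be an interior point of $\Omega$ and hence continuable---together with the routine but technical justification of the almost-everywhere chain rule for the composition of the $C^{1}$ map $b^{-1}$ with the merely absolutely continuous solution $\Phi$.
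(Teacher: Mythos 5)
Your proposal is correct and takes essentially the same route as the paper's proof: push $\Phi$ through $b^{-1}$, use the almost-everywhere chain rule (with the $B_{i_j}$ factors cancelling by construction of $H$, $F$, $G$) to verify that $b^{-1}\circ\Phi$ is a complete Carath\'{e}odory solution of \eqref{eq:Dynamics} under $\xi$, then invoke uniqueness of closed-loop solutions inside the barrier to identify it with $\Lambda$ and transfer completeness. Your extra care about the absolute-continuity chain rule and about the Lipschitz regularity of $\zeta$ implicitly needed for uniqueness only makes explicit what the paper leaves tacit.
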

\begin{proof}
See Appendix \ref{appendix:lemma1}.
\end{proof}

Note that the feedback $\xi$ is well-defined at $x$ only if $b(x)$ is well-defined, which is the case whenever $x$ is inside the barrier. As such, the main conclusion of the lemma also implies that $\Lambda(\cdot,x^0,\xi)$ remains inside the barrier. It is thus inferred from Lemma \ref{lem:trajectoryRelation} that if the trajectories of \eqref{eq:BTDynamics} are bounded and decay to a neighborhood of the origin under a feedback policy $(s,t) \mapsto \zeta (s,t)$, then the feedback policy $(x,t) \mapsto \zeta \big(b(x),t \big)$, when applied to the original system in \eqref{eq:Dynamics}, achieves the control objective stated in Section \ref{control object}.

In this paper, a feedback controller is designed, using output feedback, in the transformed coordinates and leveraging Lemma \ref{lem:trajectoryRelation}, a feedback controller that keeps the trajectories of the original system within the safe bounds, without using full state measurements, is generated. Note that the unknown part of the state, $x_{2}$, is simply the time derivative of the output, $x_{1}$. While the derivative can be computed numerically, state estimators, such as the one designed in the following section, have been shown to be more robust to measurement noise than numerical differentiation \cite{SCC.Ghanes.Barbot.ea2017,SCC.Dinh.Kamalapurkar.ea2014,SCC.Dabroom.Khalil.ea1997,SCC.Chitour.ea2002}.
Furthermore, the state estimator designed in the following section allows for rigorous inclusion of state estimation errors in the analysis of the controller. 

\section{State Estimation}\label{sec:Velocity-estimator-designCH3}
In this section, a state estimator
inspired by \cite{SCC.Self.Harlan.ea2019} is developed to generate estimates of $x$. The state estimator relies on the Brunovsky form in \eqref{eq:Dynamics}. Since the transformed system in \eqref{eq:BTDynamics} is not in the Brunovsky form, the state estimator is derived in the original coordinates with a feedback term that is derived using Lyapunov-based analysis of the estimation error in the transformed coordinates. The estimator is given by
\begin{align}\label{eq:baseEqx}
    \dot{\hat{x}}_{1} = \hat{x}_{2}, \quad \dot{\hat{x}}_{2} = f\left(\hat{x}\right)+g\left(\hat{x}\right)u+\nu_{1},
\end{align}
where, $\nu_{1} = [\nu_{1_1};\hdots;\nu_{1_n}] \in \mathbb{R}^n$ is a feedback term designed in the following. The design of $\nu_{1}$ is motivated by the need to establish bounds (precisely, \eqref{eq:need_nu1_design} of the appendix) on the state estimation errors in the transformed coordinates. To facilitate the design of $\nu_{1}$, let the state estimation errors be defined as $ \tilde{x}_{1}=x_{1}-\hat{x}_{1}$, and $ \tilde{x}_{2} = x_{2}-\hat{x}_{2}$. The feedback component $\nu_{1_{j}}$ where $j\in \{1,\hdots, n\}$ is designed as 
\begin{equation}
    \nu_{1_{j}}= \frac{\alpha^{2}(b(x_{1_{j}})-b(\hat{x}_{1_{j}}))-\left(k+\alpha+\beta_{1}\right)\eta_{{j}}}{B_{1_j}\left(b\left(\hat{x}_{1_j}\right)\right)}, \label{eq:5}
\end{equation}
where the signal $\eta_{{j}}$ is added to compensate for the fact that $x_{2_{j}}$ is not measurable. Based on
the subsequent stability analysis, the signal $\eta_{{j}}$ is designed
as the output of the dynamic filter 
\begin{equation}
	\dot{\eta_{j}}=-\beta_{1}\eta_{j}-kr_{j}-\alpha \left(\frac{\mathrm{d}}{\mathrm{d}t} \left(b
	(x_{1_{j}})-b(\hat{x}_{1_{j}})\right)\right),\label{eq:6}
\end{equation}
where $\eta_{j}\left(t_{0}\right)=0$, $\alpha$, $k$, and $\beta_{1}$ are positive constants, $t_{0}$ is the initial time and the error signal $r_{{j}}$ is defined as
\begin{equation}
r_{j}= \frac{\mathrm{d}}{\mathrm{d}t} 
\left(b(x_{1_{j}})-b(\hat{x}_{1_{j}})\right)+\alpha (b(x_{1_{j}})-b(\hat{x}_{1_{j}}))+\eta_{j}.\label{eq:7}
\end{equation}
The signal $\eta_{j}$ can be implemented, without numerically differentiating $b(x_{1_j})$, via the filter,
\begin{align}
    \dot{\overline{\eta_{j}}} &= -(k+\beta_{1})\eta_{j}-k\alpha\Big(b(x_{1_{j}})-b(\hat{x}_{1_{j}})\Big),\dot{\overline{\eta_{j}}}(0)=0,\nonumber\\
    \eta_{j}(t) &= \overline{\eta_{j}}(t)- (k+\alpha)\Big(b(x_{1_{j}}(t))-b(\hat{x}_{1_{j}}(t))  -b(x_{1_{j}})(0)+b(\hat{x}_{1_{j}})(0) \Big),\label{eq:8}
\end{align}
where, $\overline{\eta_{j}}$ is an auxiliary signal.
To facilitate the analysis, which is done in transformed coordinates, an equivalent expression of the state estimator in the transformed coordinates is needed. To transform the state estimator using the BT, let $ \hat{s}_{i_{j}} \coloneqq b(\hat{x}_{i_{j}})$, and $\tilde{s}_{i_{j}} \coloneqq s_{i_{j}} - \hat{s}_{i_{j}}$. The state estimator can then be expressed in transformed coordinates $\hat{s} \coloneqq [\hat{s}_{1};\hat{s}_{2}] \in \mathbb{R}^{2n}$, as
\begin{align}\label{eq:BTDynamicsestimated}
     \dot{\hat{s}}_{1} = H(\hat{s}), \quad \dot{\hat{s}}_{2} = F(\hat{s}) + G(\hat{s})u + \nu_{2},
\end{align}
where $\nu_{2} = [\nu_{2_1};\hdots;\nu_{2_n}] \in \mathbb{R}^n$ is given by $ \nu_{2_{j}}=B_{2_j}\left(\hat{s}_{2_j}\right)\nu_{1_{j}}$, where $j\in \{1,\hdots, n\}$.

As detailed in Lemma \ref{lem2} below, the design of the BT ensures that the trajectories of \eqref{eq:Dynamics}, \eqref{eq:baseEqx}, \eqref{eq:5}, \eqref{eq:6}, \eqref{eq:7}, and \eqref{eq:BTDynamicsestimated} are linked by the BT whenever the underlying state trajectories $x(\cdot)$ and $s(\cdot)$ and the initial conditions $\hat{x}^{0}$ and $\hat{s}^{0}$ are linked by the BT. 

\begin{lem}\label{lem2}If $t \mapsto \Psi\big(t;b(x_{1}(\cdot)),b(\hat{x}^{0}) \big)$ is a Carath\'{e}odory solution to \eqref{eq:BTDynamicsestimated} along the trajectory $x_{1}(\cdot)$, starting from the initial condition $b(\hat{x}^{0})$, and if $t \mapsto \xi(t;x_{1}(\cdot),\hat{x}^{0})$ is a Carath\'{e}odory solution to \eqref{eq:baseEqx}, starting from the initial condition $\hat{x}^{0}$, along the trajectory $x_{1}(\cdot)$, then $ \xi(t;x_{1}(\cdot),\hat{x}^{0}) = b^{-1}\big(\Psi\big(t;b(x_{1}(\cdot)),b(\hat{x}^{0})\big)\big) $ for all $t \in \mathbb{R}_{\geq 0}$.
\end{lem}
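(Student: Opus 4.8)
The plan is to follow the same strategy as the proof of Lemma~\ref{lem:trajectoryRelation}: exhibit $b^{-1}\big(\Psi(\cdot)\big)$ as a Carath\'{e}odory solution of the original-coordinate estimator \eqref{eq:baseEqx} and then appeal to uniqueness. Pushing the correspondence through $b^{-1}$ is the cleaner direction because, componentwise, $b^{-1}_{(a_{i_j},A_{i_j})}$ is a globally defined, smooth, strictly increasing diffeomorphism from $\mathbb{R}$ onto the open interval $(a_{i_j},A_{i_j})$; consequently $\hat{x}(t) \coloneqq b^{-1}\big(\Psi(t)\big)$ is well defined and absolutely continuous at every $t$ at which $\Psi$ is defined, and it automatically remains inside the barrier.

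First I would check the initial condition and differentiate. Since $\Psi(0) = b(\hat{x}^{0})$, we have $\hat{x}(0) = b^{-1}\big(b(\hat{x}^{0})\big) = \hat{x}^{0}$. Writing $\hat{s} \coloneqq \Psi$ and combining $\tfrac{\mathrm{d}}{\mathrm{d}s_{i_j}} b^{-1}_{(a_{i_j},A_{i_j})}(s_{i_j}) = 1/B_{i_j}(s_{i_j})$ with the chain rule, I would compute, for the first block, $\dot{\hat{x}}_{1_j} = \tfrac{1}{B_{1_j}(\hat{s}_{1_j})}\dot{\hat{s}}_{1_j} = \tfrac{1}{B_{1_j}(\hat{s}_{1_j})}\,(H(\hat{s}))_j = b^{-1}(\hat{s}_{2_j}) = \hat{x}_{2_j}$, recovering $\dot{\hat{x}}_1 = \hat{x}_2$. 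For the second block I would substitute the definitions $(F(\hat{s}))_j = B_{2_j}(\hat{s}_{2_j})\big(f(b^{-1}(\hat{s}))\big)_j$, $(G(\hat{s}))_j = B_{2_j}(\hat{s}_{2_j})\big(g(b^{-1}(\hat{s}))\big)_j$, and $\nu_{2_j} = B_{2_j}(\hat{s}_{2_j})\nu_{1_j}$ into $\dot{\hat{x}}_{2_j} = \tfrac{1}{B_{2_j}(\hat{s}_{2_j})}\dot{\hat{s}}_{2_j}$; the common factor $B_{2_j}(\hat{s}_{2_j})$ cancels and, using $b^{-1}(\hat{s}) = \hat{x}$, yields exactly $\dot{\hat{x}}_{2_j} = (f(\hat{x}))_j + (g(\hat{x})u)_j + \nu_{1_j}$, i.e. \eqref{eq:baseEqx}. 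Each $B_{i_j} > 0$ everywhere, so none of these divisions is singular.

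The step that needs care is the dynamic feedback $\nu_{1}$, which is not a static function of the estimator state but also depends on the filter signal $\eta$ of \eqref{eq:6} and \eqref{eq:8}. I would observe that $\eta_j$ is driven only by $b(x_{1_j}) - b(\hat{x}_{1_j})$ and the measured trajectory $x_1(\cdot)$, and that along the candidate $b(\hat{x}_{1_j}(t)) = \Psi_{1_j}(t) = \hat{s}_{1_j}(t)$; hence the argument $b(x_{1_j}) - b(\hat{x}_{1_j}) = s_{1_j} - \hat{s}_{1_j}$ driving the filter is the very same signal in both coordinate representations, started from the common value $\eta_j(t_0) = 0$. Consequently $\eta$, and therefore $\nu_1$ and $\nu_{2_j} = B_{2_j}(\hat{s}_{2_j})\nu_{1_j}$, are consistent, and $\hat{x} = b^{-1}(\Psi)$ is a bona fide Carath\'{e}odory solution of \eqref{eq:baseEqx} along $x_1(\cdot)$ from $\hat{x}^{0}$. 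I expect this bookkeeping -- confirming that the augmented pair $(\hat{x},\eta)$, equivalently $(\hat{s},\eta)$, transforms consistently so that the two driven estimators really share the same time dependence -- to be the main obstacle, whereas the vector-field computation is routine.

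Finally I would close by uniqueness. Because $f$ and $g$ are locally Lipschitz and $b$, $b^{-1}$, and the strictly positive maps $B_{i_j}$ are smooth, the right-hand side of \eqref{eq:baseEqx} is, along the fixed measurable signals $x_1(\cdot)$ and $\eta(\cdot)$, a Carath\'{e}odory function that is locally Lipschitz in the state; standard uniqueness for Carath\'{e}odory solutions then forces $\xi(\cdot;x_1(\cdot),\hat{x}^{0})$ to coincide with $\hat{x} = b^{-1}(\Psi)$ on their common interval of existence, giving $\xi(t;x_1(\cdot),\hat{x}^{0}) = b^{-1}\big(\Psi(t;b(x_1(\cdot)),b(\hat{x}^{0}))\big)$ for all $t \in \mathbb{R}_{\geq 0}$.
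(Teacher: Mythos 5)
Your proposal is correct and takes essentially the same route as the paper's proof: differentiate $b^{-1}\circ\Psi$ via the chain rule, use the construction of $H$, $F$, $G$, and $\nu_{2}$ to recognize it as a Carath\'{e}odory solution of \eqref{eq:baseEqx} starting from $b^{-1}\big(b(\hat{x}^{0})\big)=\hat{x}^{0}$, and then identify it with $\xi$. If anything, your explicit bookkeeping for the $\eta$-filter (noting that it is driven by $b(x_{1})-b(\hat{x}_{1})=s_{1}-\hat{s}_{1}$, the same signal in both coordinate representations) and your appeal to local-Lipschitz uniqueness for the augmented pair $(\hat{x},\eta)$ are tighter than the paper, which folds the filter into ``the construction of $\nu_{2}$'' and closes the argument by citing only continuity of the two solutions.
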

\begin{proof}
See Appendix \ref{appendix:lemma2}.
\end{proof}
The following section develops a bound on a Lyapunov-like function of the state estimation errors to be utilized in the subsequent stability analysis.
\section{Errors bounds for the state estimator}\label{Stability Analysis for the state estimation}
To develop bounds on the estimation errors, consider the time-derivative of \eqref{eq:BTDynamics}: 
     $ \ddot{{s}}_{1}  = F_{2}(s)  + F_{3}(s) + G_{1}(s)u$,
where 
$(F_{3}(s))_j  \coloneqq B_{1_j}\left(s_{1_j}\right)f\left([b^{-1}(s_{1_{j}}), b^{-1}(s_{2_{j}})]\right)$, \\ $ (G_{1}(s))_j\coloneqq B_{1_j}\left(s_{1_j}\right) g\left([b^{-1}(s_{1_{j}}), b^{-1}(s_{2_{j}})]\right)$, and 
     $(F_{2}(s))_j  \coloneqq \left(\frac{a_{1_{j}}^{2}e^{s_{1_{j}}}-A_{1_{j}}^{2}e^{-s_{1_{j}}}} {A_{1_{j}}a_{1_{j}}^{2} - a_{1_{j}}A_{1_{j}}^{2}}\right)  b^{-1}(s_{2_{j}}).$
Similarly, the time-derivative of \eqref{eq:BTDynamicsestimated} yields
      $\ddot{\hat{s}}_{1} = F_{2}(\hat{s})  + F_{3}(\hat{s}) + G_{1}(\hat{s})u + \nu_{3},$
where $\nu_{3} = [\nu_{3_1};\hdots;\nu_{3_n}] \in \mathbb{R}^n$ and
    $\nu_{3_{j}}=\beta_{1_j}\left(\hat{s}_{1_j}\right)\nu_{1_{j}}$, which yields
$\nu_{3_{j}} = (\alpha^{2}\tilde{s}_{1_{j}}-\left(k+\alpha+\beta\right)\eta_{j}),$ and \eqref{eq:6} can be expressed as
\begin{equation}\label{eq:eta_final}
	\dot{\eta_{j}} =-\beta_{1}\eta_{j}-kr_{j}-\alpha\dot{\tilde{s}}_{1_{j}}.
\end{equation}
In vector form, \eqref{eq:eta_final} can be expressed as
	$\dot{\eta} =-\beta_{1}\eta-kr-\alpha(\tilde{H}(s,\hat{s})),$
where $\tilde{H}(s,\hat{s}) \coloneqq H(s) - H(\hat{s}) = \dot{\tilde{s}}_1$, $\eta = [\eta_{1};\hdots;\eta_{n}]$, and $r = [r_{1};\hdots;r_{n}]$.
Furthermore, \eqref{eq:7} can be rewritten as $ r = \dot{\tilde{s}}_{1} + \alpha\tilde{s}_{1} + \eta$, which yields
\begin{multline}
    \dot r = F_{2}({s})+ F_{3}({s}) + G_{1}({s})\hat{u}(\hat{s},\hat{W}_{a}) -  F_{2}(\hat{s}) - F_{3}(\hat{s}) \\- G_{1}(\hat{s})\hat{u}(\hat{s},\hat{W}_{a}) - \alpha^2 \tilde{s}_{1} + (k+\alpha+\beta_{1})\eta \nonumber+ \alpha\dot{\tilde{s}}_{1} -\beta_{1}\eta-kr-\alpha\dot{\tilde{s}}_{1},
\end{multline}
which can be expressed as
\begin{equation}
\label{eq:r_totalfinal}
    \dot r = \tilde{F_2}(s,\hat{s}) + \tilde{F_{3}}(s,\hat{s}) + \tilde{G}_{1}(s,\hat{s})\hat{u}(\hat{s},\hat{W}_{a}) - \alpha^2 \tilde{s}_{1} - kr + k\eta + \alpha\eta,
\end{equation}
where $\tilde{F_2}(s,\hat{s}) \coloneqq F_{2}(s) - F_{2}(\hat{s})$, $\tilde{F_3}(s,\hat{s}) \coloneqq F_{3}(s) - F_{3}(\hat{s})$, $\tilde{G_1}(s,\hat{s}) \coloneqq G_{1}(s) - G_{1}(\hat{s})$. 
The following lemma develops a bound on a Lyapunov-like function of the state estimation errors $\tilde{s}_{1}$, $r$, and $\eta$. The bound is utilized in the subsequent stability 
analysis in Section $\ref{sec:Analysis}$.
\begin{lem}\label{lem3}
Let $V_{se} : \mathbb{R}^{3n} \rightarrow \mathbb{R}_{\geq 0}$ be a continuously differentiable candidate Lyapunov function defined as
$ V_{se}(Z_{1}) \coloneqq \frac{\alpha^2}{2}\tilde{s}_{1}^T\tilde{s}_{1} + \frac{1}{2}r^Tr + \frac{1}{2}\eta^T\eta$,
where $Z_{1} \coloneqq [\tilde{s}_{1}^T,r^T,\eta^T]$. Provided $s$, $\hat{s} \in \overline{B}(0,\chi)$, where $\overline{B}(0,\chi)$ is the closed ball of radius $\chi>0$ centered at the origin, the orbital derivative of $V_{se}$, along the trajectories of $\dot{\tilde{s}}_{1}$, $\dot{r}$, and $\dot{\eta}$, defined as $\dot{V}_{se}(Z_{1},s,\tilde{s},\tilde{W}_{a}) \coloneqq \frac{\partial V_{se}(Z_{1},s,\tilde{s},\tilde{W}_{a})}{\partial \tilde{s}_{1}}(H(s)-H(\hat{s})) + \frac{\partial V_{se}(Z_{1},s,\tilde{s},\tilde{W}_{a})}{\partial r}\dot{r} + \frac{\partial V_{se}(Z_{1},s,\tilde{s},\tilde{W}_{a})}{\partial \eta}\dot{\eta}$, can be bounded as $\dot{V}_{se}(Z_{1},s,\tilde{s},\tilde{W}_{a}) \leq -{\alpha^3}\|\tilde{s}_{1}\|^{2} - (k-\varpi_{1}\varpi_{4})\|r\|^{2} - (\beta_{1}-\alpha)\|\eta\|^{2} + \varpi_{1}\left(1+\varpi_{4}+\varpi_{4}\alpha\|\right) \|r\|\|\tilde{s}_{1}\|  + \varpi_{1}\varpi_{4} \|r\| \|\eta\|  \\ + \varpi_{2} \|r\| \|\tilde{W}_{a}\|  +  \varpi_{3} \|r\|$.
\end{lem}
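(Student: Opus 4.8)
The plan is to evaluate the orbital derivative termwise, substitute the closed-form expressions for $\dot{\tilde{s}}_1$, $\dot r$, and $\dot\eta$, and then exploit the defining identity for $r$ so that all indefinite cross terms cancel, leaving a negative-definite quadratic form perturbed by a Lipschitz residual that is bounded on the ball. First I would record the gradients of the quadratic $V_{se}$, namely $\partial V_{se}/\partial\tilde{s}_1 = \alpha^2\tilde{s}_1^T$, $\partial V_{se}/\partial r = r^T$, and $\partial V_{se}/\partial\eta = \eta^T$, so that $\dot V_{se} = \alpha^2\tilde{s}_1^T\tilde H + r^T\dot r + \eta^T\dot\eta$ with $\tilde H = H(s)-H(\hat s) = \dot{\tilde{s}}_1$.

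The critical manipulation is to use the definition $r = \dot{\tilde{s}}_1 + \alpha\tilde{s}_1 + \eta$ in the equivalent form $\tilde H = r - \alpha\tilde{s}_1 - \eta$, and to insert this into $\dot\eta = -\beta_1\eta - kr - \alpha\tilde H$ and into the first term, while substituting \eqref{eq:r_totalfinal} for $\dot r$. Collecting the result, every bilinear term $\tilde{s}_1^T r$, $\tilde{s}_1^T\eta$, and $r^T\eta$ cancels identically; this cancellation is precisely the property that the feedback $\nu_1$, the filter for $\eta$, and the auxiliary signal $r$ were constructed to enforce. What survives is
\[
\dot V_{se} = -\alpha^3\|\tilde{s}_1\|^2 - k\|r\|^2 - (\beta_1-\alpha)\|\eta\|^2 + r^T\bigl(\tilde F_2 + \tilde F_3 + \tilde G_1\hat u\bigr),
\]
so it only remains to bound the residual by Cauchy--Schwarz, $r^T(\tilde F_2 + \tilde F_3 + \tilde G_1\hat u) \le \|r\|\,\|\tilde F_2 + \tilde F_3 + \tilde G_1\hat u\|$. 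On the compact ball $\overline{B}(0,\chi)$ the maps $F_2$, $F_3$, $G_1$ are Lipschitz, yielding bounds of the form $\|\tilde F_2\|,\|\tilde F_3\| \le c\,\|\tilde s\|$ with $\tilde s = [\tilde{s}_1;\tilde{s}_2]$, and I would split $\hat u(\hat s,\hat W_a)$ into a part bounded by a constant and a part affine in $\tilde W_a$ to generate the $\varpi_2\|\tilde W_a\|$ and $\varpi_3$ contributions.

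The main obstacle is that $V_{se}$ contains no $\tilde{s}_2$, so the $\tilde{s}_2$ hidden inside $\|\tilde s\|$ must be re-expressed through the Lyapunov variables. For this I would write $(\tilde H)_j = [B_{1_j}(s_{1_j}) - B_{1_j}(\hat s_{1_j})]b^{-1}(s_{2_j}) + B_{1_j}(\hat s_{1_j})[b^{-1}(s_{2_j}) - b^{-1}(\hat s_{2_j})]$ and use that, on the ball, $B_{1_j}$ is bounded below by a positive constant and $b^{-1}$ has derivative bounded below, so the map $s_{2_j}\mapsto (\tilde H)_j$ is coercive and yields $\|\tilde{s}_2\| \le \varpi_4(\|\tilde H\| + \|\tilde{s}_1\|)$. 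Combining this with $\|\tilde H\| = \|r - \alpha\tilde{s}_1 - \eta\| \le \|r\| + \alpha\|\tilde{s}_1\| + \|\eta\|$ gives $\|\tilde s\| \le (1+\varpi_4+\varpi_4\alpha)\|\tilde{s}_1\| + \varpi_4\|r\| + \varpi_4\|\eta\|$.

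Substituting this into the residual bound and multiplying by $\|r\|$ reproduces every stated term, the $\varpi_1\varpi_4\|r\|^2$ piece being absorbed into $-(k-\varpi_1\varpi_4)\|r\|^2$. This coercivity argument for $\tilde{s}_2$ is the only place where the restriction $s,\hat s\in\overline{B}(0,\chi)$ is genuinely used; everything else is routine constant bookkeeping, so I expect that recovery step to be the crux of the proof.
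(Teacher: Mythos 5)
Your proposal is correct and follows essentially the same route as the paper's proof: the identical exact cancellation of the $\tilde{s}_{1}^{T}r$, $\tilde{s}_{1}^{T}\eta$, and $r^{T}\eta$ cross terms leaving $-\alpha^{3}\|\tilde{s}_{1}\|^{2}-k\|r\|^{2}-(\beta_{1}-\alpha)\|\eta\|^{2}+r^{T}(\tilde{F}_{2}+\tilde{F}_{3}+\tilde{G}_{1}\hat{u})$, the same Cauchy--Schwarz/Lipschitz bound on the residual with the $\hat{W}_{a}$-dependence split into the $\varpi_{2}\|r\|\|\tilde{W}_{a}\|$ and $\varpi_{3}\|r\|$ terms, and the same recovery inequality $\|\tilde{s}\|\leq(1+\varpi_{4}+\varpi_{4}\alpha)\|\tilde{s}_{1}\|+\varpi_{4}\|r\|+\varpi_{4}\|\eta\|$ of \eqref{eq:need_nu1_design}. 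Your coercivity argument for recovering $\tilde{s}_{2}$ is simply a more explicit justification of the paper's step, which inverts $\dot{s}_{1}=H(s)$ as $s_{2}=h(s_{1},\dot{s}_{1})$ and invokes Lipschitz continuity of $h$ on $\overline{B}(0,\chi)$ with constant $\varpi_{4}$.
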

\begin{proof}
See Appendix \ref{appendix:lemma3}.
\end{proof}
The following section develops a novel OF-SMBRL technique for synthesizing  feedback  control  policies.
\section{Safe Model-based Reinforcement Learning}\label{Model-Based Reinforcement Learning}
Lemma \ref{lem:trajectoryRelation} implies that if a feedback controller that practically stabilizes the transformed system in \eqref{eq:BTDynamics} is designed, then the same feedback controller, applied to the original system by inverting the BT also achieves the control objective stated in Section \ref{control object}. In the following, a controller that practically stabilizes \eqref{eq:BTDynamics} is designed as an estimate of a controller that minimizes the infinite horizon cost\footnote{A state penalty function $x\mapsto E(x)$, given in the original coordinates, can easily be transformed into an equivalent state penalty $Q(s) = E(b^{-1}(s))$. Since the barrier function is monotonic and $b(0) = 0$, if $E$ is positive definite, then so is $Q$. Furthermore, for applications with bounded control inputs, a non-quadratic penalty function similar to Eq. 17 of \cite{SCC.Yang.Ding.ea2020} can be incorporated in \eqref{cost function}.}
\begin{equation} \label{cost function}
J(u(\cdot)) \coloneqq 	\int_{0}^\infty c(\phi(\tau,s^0,u(\cdot)), u(\tau)) d\tau,
\end{equation}
over the set $\mathcal{U}$ of piecewise continuous functions $t\mapsto u(t)$, 
subject to \eqref{eq:BTDynamics},
where $\phi(\tau, s^0, u (\cdot))$ denotes the trajectory of
(\ref{eq:BTDynamics}), evaluated at time $\tau$, starting from the state $s^0$, and
under the controller $u (\cdot)$. In \eqref{cost function}, $c(s,u) \coloneqq Q'(s) + u^{T}Ru$, with $Q'(s) \coloneqq [q_1((s)_1),\cdots, q_{2n}((s)_{2n})]$, where $q_i: \mathbb{R} \mapsto \mathbb{R} $ are strictly monotonic functions with $q_i(0) = 0$, for all $i=1,...,2n$, and $R \in \mathbb{R}^{m \times m}$ is a symmetric positive definite (PD) matrix. For the optimal value function to be a Lyapunov function for the optimal policy, the following assumption is needed \cite{SCC.Self.Harlan.ea2019}.

\begin{assum}\label{ass:CostRestrictions} One of the following is true:\begin{enumerate}
	\item $ Q' $ is \text{PD}.
	\item $ Q' $ is \text{positive semidefinite (PSD)}, and $s_{1}\mapsto Q'\left(s\right) $ is \text{PD} for all nonzero $ s_{2}\in \mathbb{R}^{n}$.
	\item $ Q' $ is \text{PSD}, $s_{2}\mapsto Q'\left(s\right) $ is \text{PD} for all nonzero $ s_{1}\in \mathbb{R}^{n} $ and $ F\left(s\right)\neq 0 $ whenever $ s_{1} \neq 0 $.
\end{enumerate}\end{assum}
Assuming that an optimal controller exists, let the optimal value function, denoted by $V^{*} : \mathbb{R}^{n} \times \mathbb{R}^q  \rightarrow \mathbb{R} $, be defined as
\begin{equation}V^{*}(s) := \min_{u(\cdot)\in \mathcal{U}_{[t,\infty})}\int_{t}^\infty c(\phi(\tau,s,u_{[0,\tau)}(\cdot)), u(\cdot)) d\tau, \label{eq:valuefunction}\end{equation}
where $u_I$ and $\mathcal{U}_I$ are obtained by restricting the domains of $u$ and functions in $\mathcal{U}_I$ to the interval $ I \subseteq \mathbb{R} $, respectively. Assuming that the optimal value function is continuously differentiable, it can be shown to be the unique PD solution of the Hamilton-Jacobi-Bellman (HJB) equation \cite[Theorem 1.5]{SCC.Kamalapurkar.Walters.ea2018}
\begin{equation}\label{HJB} 
\min_{u\in\mathbb{R}^q} \Big(V_{s_{1}}\left(H(s)\right)+ V_{s_{2}} \left(F(s)+G(s)u\right) + Q^\prime(s)+ u^{T}Ru\Big) = 0,
\end{equation}
where $\nabla_{\left(\cdot\right)} \coloneqq  \frac{\partial}{\partial \left(\cdot\right)}$, and $V_{\left(\cdot\right)} \coloneqq  \nabla_{\left(\cdot\right)} V$. Furthermore, the optimal controller is given by the feedback policy $u(t) = u^*(\phi(t,s,u_{[0,t)}))$ where $ u^{*}: \mathbb{R}^{n} \rightarrow \mathbb{R}^{m} $ defined as
\begin{equation}\label{eq:optimalcontrol}
    u^{*}(s) := -\frac{1}{2}R^{-1}G(s)^{T}(\nabla_{s_{2}}V^{*}(s))^{T}.
\end{equation}
\subsection{Stability Under Optimal State Feedback\label{sec:Stability Under Optimal Feedback}}
The following theorem establishes global asymptotic stability of the closed-loop system under optimal state feedback.
\begin{thm}\label{thm:Optimal GAS}
	If the optimal state feedback controller \eqref{eq:optimalcontrol} that minimizes the cost function in \eqref{cost function} exists and if the corresponding optimal value function is continuously differentiable and radially unbounded, then the origin of closed-loop system $ \dot{s}_{1} = H(s) $, and $ \dot{s}_{2} = F({s}) + G({s})u^{*}(s) $ is globally asymptotically stable.
\end{thm}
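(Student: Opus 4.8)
The plan is to use the optimal value function $V^{*}$ as a Lyapunov function for the closed-loop system. Three properties of $V^{*}$ are needed: it is continuously differentiable and radially unbounded (both granted by hypothesis), and it is positive definite. Positive semidefiniteness together with $V^{*}(0)=0$ is immediate, since $c(s,u)=Q'(s)+u^{T}Ru\geq 0$ forces $V^{*}\geq 0$ via \eqref{cost function}, and since the origin is an equilibrium of \eqref{eq:BTDynamics} under $u=0$ (using $b^{-1}(0)=0$, so that $H(0)=0$, and $F(0)=0$ as the origin is the regulation target), the zero control incurs zero cost from $s=0$. Strict positivity for $s\neq 0$ is deferred to the invariant-set argument below.

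Next I would compute the orbital derivative of $V^{*}$ along the closed loop $\dot{s}_{1}=H(s)$, $\dot{s}_{2}=F(s)+G(s)u^{*}(s)$. Because $R$ is PD, the map $u\mapsto V_{s_{1}}H(s)+V_{s_{2}}(F(s)+G(s)u)+Q'(s)+u^{T}Ru$ is strictly convex, so its unique minimizer is exactly the control \eqref{eq:optimalcontrol}; the HJB equation \eqref{HJB} then asserts that this minimized expression vanishes identically. Substituting $u^{*}$ and rearranging yields
\[
\dot{V}^{*}(s)=V_{s_{1}}H(s)+V_{s_{2}}\big(F(s)+G(s)u^{*}(s)\big)=-Q'(s)-u^{*}(s)^{T}R\,u^{*}(s)\leq 0,
\]
which establishes Lyapunov stability of the origin once positive definiteness of $V^{*}$ is secured.

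Asymptotic stability then splits according to Assumption \ref{ass:CostRestrictions}. In case (1), $Q'$ is PD, so $\dot{V}^{*}(s)<0$ for every $s\neq 0$ and global asymptotic stability follows from the standard Lyapunov theorem together with radial unboundedness. In cases (2) and (3), $Q'$ is only PSD, and I would invoke LaSalle's invariance principle: trajectories approach the largest invariant set contained in $\{s:\dot{V}^{*}(s)=0\}$. Since $R$ is PD and $Q'\geq 0$, the condition $\dot{V}^{*}=0$ is equivalent to $Q'(s)=0$ and $u^{*}(s)=0$, so on any such invariant set the dynamics reduce to $\dot{s}_{1}=H(s)$, $\dot{s}_{2}=F(s)$ with $Q'(s(\cdot))\equiv 0$. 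The crux is to show this set is $\{0\}$: using that $H(s)=0$ iff $s_{2}=0$ (each $B_{1_{j}}>0$ and $b^{-1}$ vanishes only at the origin), the zero set of $Q'$ in cases (2)/(3), and in case (3) the extra hypothesis $F(s)\neq 0$ whenever $s_{1}\neq 0$, one argues that $Q'\equiv 0$ along an invariant trajectory forces one block of the state and then, through the cascade coupling, the other to vanish. Applying the identical argument to the optimal trajectory emanating from an arbitrary $s_{0}\neq 0$ shows $V^{*}(s_{0})>0$, closing the positive-definiteness gap left above.

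The main obstacle is precisely this invariant-set characterization in cases (2) and (3): unlike case (1), strict decrease of $\dot{V}^{*}$ fails on the nontrivial zero set of $Q'$, so one must exploit the detectability-type structure encoded in Assumption \ref{ass:CostRestrictions} and the cascade form of \eqref{eq:BTDynamics} to rule out nonzero invariant solutions. Propagating the constraint $Q'\equiv 0$ through $\dot{s}_{1}=H(s)$ and $\dot{s}_{2}=F(s)$ to conclude that the trajectory must be identically zero is the step that genuinely distinguishes the three conditions and where the care lies.
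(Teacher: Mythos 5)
Your proposal is correct and follows essentially the same route as the paper: use $V^{*}$ as a Lyapunov function, get $\dot{V}^{*}\leq -Q'(s)-u^{*}(s)^{T}Ru^{*}(s)$ from the HJB equation, conclude directly in case (1) and via the invariance principle in cases (2) and (3), and establish positive definiteness of $V^{*}$ by the same zero-cost-trajectory contradiction (if $V^{*}(s_{0})=0$ with $s_{0}\neq 0$, then $Q'\equiv 0$ and $u^{*}\equiv 0$ along the optimal trajectory, which Assumption \ref{ass:CostRestrictions} and the cascade structure rule out). The only cosmetic difference is ordering — the paper proves positive definiteness first, while you defer it — which does not affect correctness.
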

\begin{proof}
See Appendix \ref{appendix:thm:Optimal GAS}.
\end{proof}

\subsection{Value function approximation}
Since computation of analytical solutions of the HJB equation is generally infeasible, especially for nonlinear systems, parametric approximation methods are used to approximate the value function $V^{*}$, and the optimal policy $u^{*}$. The optimal value function is expressed as
\begin{equation} \label{eq:optimalV}
    V^{*}\left(s\right)=W^{T}\sigma\left(s\right)+\epsilon\left(s\right),
\end{equation}
where $W\in\mathbb{R}^{L}$ is an unknown vector of bounded weights, $\sigma:\mathbb{R}^{2n}\rightarrow\mathbb{R}^{L}$ is a vector of continuously differentiable nonlinear activation functions \cite[Def. 2.1]{SCC.Sadegh1993} such that $\sigma\left(0\right)=0$ and $\nabla_{s} \sigma \left(0\right)=0$, $L\in\mathbb{N}$ is the number of basis functions, and $\epsilon:\mathbb{R}^{2n}\rightarrow\mathbb{R}$ is the reconstruction error.
Exploiting the universal function approximation property \cite[Property 2.3]{SCC.Kamalapurkar.Walters.ea2018} of single layer neural networks, it can be concluded that given any compact set\footnote{Note that at this stage, the existence of a compact forward-invariant set that contains trajectories of \eqref{eq:BTDynamics} is not being assumed. The existence of such a set is established in section \ref{sec:Stability Under Optimal Feedback}, theorem \ref{thm:Optimal GAS}.} $\overline{B}\left(0,\chi\right) \subset\mathbb{R}^{2n}$ and a positive constant $\overline{\epsilon}\in\mathbb{R}$, there exists a number of basis functions $L\in\mathbb{N}$, and known positive constants $\bar{W}$ and $\overline{\sigma}$ such that $\left\Vert W\right\Vert \leq\bar{W}$, $\sup_{s\in\overline{B}\left(0,\chi\right)}\left \| \epsilon \left(s\right)\right\| \leq\overline{\epsilon}$, $\sup_{s\in\overline{B}\left(0,\chi\right)}\left\|\nabla_{s}\epsilon\left(s\right)\right\| \leq\overline{\epsilon}$, $\sup_{s\in\overline{B}\left(0,\chi\right)}\left \| \sigma \left(s\right)\right\| \leq\overline{\sigma}$, and $\sup_{s\in\overline{B}\left(0,\chi\right)}\left\|\nabla_{s}\sigma\left(s\right)\right\| \leq\overline{\sigma}$. 

Using (\ref{HJB}), a representation of the optimal controller using the same basis as the optimal value function is derived as
\begin{equation}\label{eq:optimalu}
    \medmuskip = 0mu
    \thickmuskip = 0mu
    u^{*}\left(s\right)=-\frac{1}{2}R^{-1}G^{T}\left(s\right)\left(\nabla_{s_{2}}\sigma^{T}\left(s\right)W+\nabla_{s_{2}}\epsilon^{T}\left(s\right)\right).
\end{equation}
Since the ideal weights, $W$, are unknown, an actor-critic approach is used in the following to estimate $W$. To that end, let the NN estimates $\hat{V}:\mathbb{R}^{n}\times\mathbb{R}^{L}\to\mathbb{R}$ and $\hat{u}:\mathbb{R}^{n}\times\mathbb{R}^{L}\to\mathbb{R}^{m}$ be defined as
\begin{gather}
\hat{V}\left(\hat{s},\hat{W}_{c}\right)\coloneqq\hat{W}_{c}^{T}\sigma\left(\hat{s}\right),\label{V_app}\\
\hat{u}\left(\hat{s},\hat{W}_{a}\right)\coloneqq-\frac{1}{2}R^{-1}G^{T}\left(\hat{s}\right)\nabla_{\hat{s}_{2}}\sigma^{T}\left(\hat{s}\right)\hat{W}_{a},\label{u_app}
\end{gather}
where the critic weights, $\hat{W}_{c}\in\mathbb{R}^{L}$ and actor weights, $\hat{W}_{a}\in\mathbb{R}^{L}$ are estimates of the ideal weights, $W$.
\subsection{Bellman Error}
Substituting (\ref{V_app}) and (\ref{u_app}) into (\ref{HJB}) results in a residual term, $\hat{\delta}: \mathbb{R}^{2n} \times \mathbb{R}^{L} \times \mathbb{R}^{L} \rightarrow \mathbb{R}$, referred to as the Bellman error (BE), defined as \begin{multline} \label{BE1}
    \hat{\delta}(\hat{s},\hat{W}_{c},\hat{W}_{a}) \coloneqq  \hat{V}_{\hat{s}_{1}}(\hat{s},\hat{W}_{c})\left(H(\hat{s})\right) + \hat{u}(\hat{s},\hat{W}_{a})^{T}R\hat{u}(\hat{s},\hat{W}_{a}) \\ 
    + \hat{V}_{\hat{s}_{2}}(\hat{s},\hat{W}_{c}) \left(F(\hat{s}) + G(\hat{s})\hat{u}(\hat{s},\hat{W}_{a})\right) + Q^\prime(\hat{s}).
\end{multline}
Traditionally, online RL methods require a persistence of excitation (PE) condition to be able learn the approximate control policy \cite{SCC.Modares.Lewis.ea2013,SCC.Kamalapurkar.Rosenfeld.ea2016,SCC.Kiumarsi.Lewis.ea2014}. Guaranteeing PE a priori and verifying PE online are both typically impossible. However, using virtual excitation facilitated by the model, stability and convergence of online RL can established under a PE-like condition that, while impossible to guarantee a priori, can be verified online (by monitoring the minimum eigenvalue of a matrix in the subsequent Assumption \ref{ass:CLBCADPLearnCond})\cite{SCC.Kamalapurkar.Walters.ea2016}.
Using the system model, the BE can be evaluated at any arbitrary point in the state space. Virtual excitation can then be implemented by selecting a set of state variables $\left\{ z^{k} = [z^k_1,z^k_2]^T\mid k=1,\cdots,N\right\} $, where $z^k_i\in\mathbb{R}^n$ for $i=1,2$, and evaluating the BE at this set of state variables to yield
\begin{multline} \label{BE2}
    \hat{\delta}_{k}(z^{k},\hat{W}_{c},\hat{W}_{a}) \coloneqq \hat{V}_{z^k_1}(z^{k},\hat{W}_{c})\left(H(z^{k})\right)+ Q^\prime(z^{k})\\ + \hat{V}_{z^k_2}(z^{k},\hat{W}_{c})\left(F(z^{k}) + G(z^{k})\hat{u}(z^{k},\hat{W}_{a})\right)\\
    + \hat{u}(z^{k},\hat{W}_{a})^{T}R\hat{u}(z^{k},\hat{W}_{a}).
\end{multline}
Defining the actor and critic weight estimation errors as $\tilde{W}_{c} \coloneqq W -\hat{W}_{c}$ and  $\tilde{W}_{a} \coloneqq W -\hat{W}_{a}$ and substituting the estimates \eqref{eq:optimalV} and \eqref{eq:optimalu} into (\ref{HJB}), and subtracting from \eqref{BE1}, the BE that can be expressed in terms of the weight estimation errors as\footnote{The dependence of various functions on the state, $s$, is omitted hereafter for brevity whenever it is clear from the context.}
\begin{equation} \label{Analytical BE}
\hat{\delta}=-\omega^{T}\tilde{W}_{c}+\frac{1}{4}\tilde{W}_{a}^{T}G_{\sigma}\tilde{W}_{a}+\Delta,
\end{equation}
where $\Delta\coloneqq\frac{1}{2}W^{T}\nabla_{\hat{s}_{2}} \sigma G_{R}\nabla_{\hat{s}_{2}} \epsilon^{T}+\frac{1}{4}G_{\epsilon} - \nabla_{\hat{s}_{1}}  \epsilon H-\nabla_{\hat{s}_{2}}  \epsilon F$,   
 $G_{R}\coloneqq GR^{-1}G^{T}$, $G_{\epsilon}\coloneqq \nabla_{\hat{s}_{2}} \epsilon  G_{R} \nabla_{\hat{s}_{2}}  \epsilon^{T}$, $G_{\sigma}\coloneqq \nabla_{\hat{s}_{2}}  \sigma G R^{-1}G^{T} \nabla_{\hat{s}_{2}}  \sigma^{T} $,  and  $\omega \coloneqq \nabla_{\hat{s}_{1}}  \sigma H + \nabla_{\hat{s}_{2}}  \sigma \left(F+G\hat{u}\left(\hat{s},\hat{W}_a\right)\right)$.

Similarly, (\ref{BE2}) implies that
\begin{equation} \label{Approximate BE}
    \hat{\delta}_{k}=-\omega_{k}^{T}\tilde{W}_{c}+\frac{1}{4}\tilde{W}_{a}^{T}G_{\sigma_{k}}\tilde{W}_{a}+\Delta_{k},
\end{equation}
where,
$\Delta_{k} \coloneqq \frac{1}{2}W^{T} \nabla_{z^k_2} \sigma_{k} G_{R_{k}} \nabla_{z^k_2} \epsilon_{k}^{T}+\frac{1}{4}G_{\epsilon_{k}}-\nabla_{z^k_1}\epsilon_{k} H_{k}-\nabla_{z^k_2}\epsilon_{k} F_{k}$, $G_{\epsilon_{k}}\coloneqq \nabla_{z^k_2} \epsilon_{k} G_{R_{k}} \nabla_{z^k_2} \epsilon_{k}^{T}$, $\omega_{k} \coloneqq \nabla_{z^k_1} \sigma_{k}H_{k}+\nabla_{z^k_2} \sigma_{k}\left(F_k+G_{k}\hat{u}\left(\hat{z}^k,\hat{W}_a\right)\right)$, $G_{\sigma_{k}} \coloneqq \nabla_{z^k_2} \sigma_{k} G_{k} R^{-1} G_{k}^{T} \nabla_{z^k_2} \sigma_{k}^{T}$, $G_{R_{k}}\coloneqq G_{k}R^{-1}G_{k}^{T}$, $F_{k} \coloneqq F(
z^{k})$, $G_{k} \coloneqq G(
z^{k})$, $H_{k} \coloneqq H(
z^{k})$, $\sigma_{k} \coloneqq \sigma (z^{k})$, and $\epsilon_{k} \coloneqq \epsilon(z^{k})$.

Note that $\sup_{s\in\overline{B}\left(0,\chi\right)}\left |\Delta \right| \leq d \overline{\epsilon}$ and if $z^{k} \in \overline{B}\left(0,\chi\right)$ then $ \left |\Delta_{k} \right| \leq d \overline{\epsilon}_{k}$, for some constant $d > 0$.

While the extrapolation state variables $z^k$ are assumed to be constant in this analysis for clarity, the approach extends in a straightforward manner to time-varying extrapolation state variables confined to a compact neighborhood of the origin.

\subsection{Update laws for Actor and Critic weights}
Using the extrapolated BEs $\hat{\delta}_{k}$ from (\ref{BE2}), the weights are updated according to
\begin{align}
    \dot{\hat{W}}_{c} &=- \frac{k_{c}}{N}\Gamma\sum_{k=1}^{N}\frac{\omega_{k}}{\rho_{k}}\hat\delta_{k},\label{W_c}\\
    \dot{\Gamma} &= \beta\Gamma- \frac{k_{c}}{N}\Gamma\sum_{k=1}^{N}\frac{\omega_{k}\omega_{k}^{T}}{\rho_{k}^{2}}\Gamma,\label{gamma}\\
    \dot{\hat{W}}_{a} \!\!&=\!\! -k_{a_{1}}\!\left(\!\hat{W}_{a}\!-\!\hat{W}_{c}\!\right)\!\!+\!\!\sum_{k=1}^{N}\!\!\frac{k_{c}G_{\sigma_{k}}^{T}\hat{W}_{a}\omega_{k}^{T}}{4N\rho_{k}}\hat{W}_{c}\!-\!k_{a_{2}}\!\hat{W}_{a},\label{W_a}
\end{align}
with $\Gamma\left(t_{0}\right)=\Gamma_{0}$, where $\Gamma:\mathbb{R}_{\geq t_{0}} \to \mathbb{R}^{L\times L}$
is a time-varying least-squares gain matrix, $\rho_{k}\left(t\right)\coloneqq 1+\gamma\omega_{k}^{T}\left(t\right)\omega_{k}\left(t\right)$, $\gamma > 0$ is a constant positive normalization gain,  $\beta > 0 \in \mathbb{R}$ is a constant forgetting factor, and $k_{c},k_{a_{1}},k_{a_{2}} > 0 \in \mathbb{R}$ are constant adaptation gains. 
The control commands sent to the system are then computed using the actor weights as 
\begin{equation}\label{eq:ucontrol}
    u(t)= \hat{u}\left(\hat{s}(t),\hat{W}_{a}(t)\right), \quad t\geq 0.
\end{equation}
The Lyapunov function needed to analyze the closed loop system defined by \eqref{eq:BTDynamics}, \eqref{eq:baseEqx}, \eqref{eq:5}, \eqref{eq:8},  \eqref{W_c}, \eqref{gamma}, and \eqref{W_a}  is constructed using stability properties of \eqref{eq:BTDynamics} under the optimal feedback \eqref{eq:optimalcontrol}. To that end, the following section analyzes the optimal closed-loop system.

Using Theorem \ref{thm:Optimal GAS} and the converse Lyapunov theorem for asymptotic stability \cite[Theorem 4.17]{SCC.Khalil2002}, the existence of a radially unbounded PD function $ \mathcal{V}:\mathbb{R}^{2n}\to\mathbb{R} $ and a PD function $ W:\mathbb{R}^{2n}\to\mathbb{R} $ is guaranteed such that
\begin{equation}\label{eq:Converse Lyapunov Function}
\medmuskip=0mu
\thickmuskip=0mu
\thinmuskip=0mu
    \mathcal{V}_{s_{1}}\left(s\right)F(s) +\mathcal{V}_{s_{2}}\left(s\right)\left(F\left(s\right)+G\left(s\right)u^{*}\left(s\right)\right)\leq-W\left(s\right),
\end{equation}
for all $ s\in\mathbb{R}^{2n} $. The functions $ \mathcal{V} $ and $ W $ are utilized in the following section to analyze the stability of the output feedback approximate optimal controller. 	
\section{Stability Analysis}\label{sec:Analysis}
The following PE-like rank condition is utilized in the stability analysis. 
\begin{assum}
    \label{ass:CLBCADPLearnCond}There exists a constant $\underline{c}_{1} > 0$ such that the set of points $\left\{ z^{k}\in\mathbb{R}^{n}\mid k=1,\hdots,N\right\} $ satisfies
    \begin{equation}
    \underline{c}_{1}I_{L} \leq\inf_{t\in\mathbb{R}_{\geq T}}\left(\frac{1}{N}\sum_{k=1}^{N}\frac{\omega_{k}\left(t\right)\omega_{k}^{T}\left(t\right)}{\rho_{k}^{2}\left(t\right)}\right).\label{eq:CLBCPE2}
    \end{equation}
\end{assum}
Since $\omega_{k}$ is a function of the estimates $\hat{s}$ and $\hat{W}_{a}$,  Assumption \ref{ass:CLBCADPLearnCond} cannot be guaranteed a priori. However, unlike the PE condition, Assumption \ref{ass:CLBCADPLearnCond} can be verified online. Furthermore, since $\lambda_{\min}\left(\sum_{k=1}^{N}\frac{\omega_{k}\left(t\right)\omega_{k}^{T}\left(t\right)}{\rho_{k}^{2}\left(t\right)}\right)$ is non-decreasing in the number of samples, $N$, Assumption \ref{ass:CLBCADPLearnCond} can be met, heuristically, by increasing the number of samples. It is established in \cite[Lemma 1]{SCC.Kamalapurkar.Rosenfeld.ea2016} that under 
Assumption \ref{ass:CLBCADPLearnCond} and provided $\lambda_{\min}\left\{ \Gamma_{0}^{-1}\right\} >0$, the update law in (\ref{gamma}) ensures that the least squares gain matrix satisfies 
\begin{align}\label{eq:OFBADP1Gammabound}
	\underline{\Gamma}I_{L}\leq\Gamma\left(t\right)\leq\overline{\Gamma}I_{L},		\end{align}
$\forall t\in\mathbb{R}_{\geq 0}$ and for some  $\overline{\Gamma},\underline{\Gamma}>0$. Using \eqref{eq:BTDynamics}, the orbital derivative of the PD function $\mathcal{V}$ introduced in \eqref{eq:Converse Lyapunov Function}, along the trajectories of \eqref{eq:BTDynamics}, under the controller $u= \hat{u}\left(\hat{s},\hat{W}_{a}\right)$, is given by $ \dot{\mathcal{V}}\left(s,\tilde{s},\tilde{W}_{a}\right) = \mathcal{V}_{s_{2}}\left(s\right)\left(F\left(s\right)+G\left(s\right)\hat{u}\left(\hat{s},\hat{W}_{a}\right)\right) +\mathcal{V}_{s_{1}}\left(s\right)H(s)$, where $ \tilde{s} \coloneqq s-\hat{s}$.

Using \eqref{eq:Converse Lyapunov Function} and the facts that $ G $ is bounded, the basis functions $ \sigma $ are bounded, and the value function approximation error $ \epsilon $ and its derivative with respect to $ s, \hat{s} $ are bounded on compact sets, the time-derivative can be bounded as
\begin{equation} \label{eq:Converse Lyapunov}
\dot{\mathcal{V}}\left(s,\tilde{s},\tilde{W}_{a}\right)\leq-W\left(s\right)+\iota_{1}\overline{\epsilon}+\iota_{2}\left\Vert \tilde{s}\right\Vert \left\Vert \tilde{W}_{a}\right\Vert +\iota_{3}\left\Vert \tilde{W}_{a}\right\Vert +\iota_{4}\left\Vert \tilde{s}\right\Vert,
\end{equation}for all $ \hat{W}_{a} \in \mathbb{R}^{L}$, for all $ s \in \overline{B}(0,\chi) $, and for all $ \hat{s} \in \overline{B}(0,\chi) $, where $ \iota_{1},\cdots,\iota_{4} $ are positive constants.

Let $ \Theta\left(\tilde{W}_{c},\tilde{W}_{a},t\right)\coloneqq \frac{1}{2}\tilde{W}_{c}^{T}\Gamma^{-1}\left(t\right)\tilde{W}_{c}+\frac{1}{2}\tilde{W}_{a}^{T}\tilde{W}_{a}$.
 The orbital derivative of $ \Theta $ along the trajectories of \eqref{W_c} - \eqref{W_a} is given by
 \begin{equation} \label{theta}
\dot{\Theta}\left(\tilde{W}_{c},\tilde{W}_{a},t\right)= \tilde{W}_{c}^{T}\Gamma^{-1}\dot{\tilde{W}}_{c}-\frac{1}{2}\tilde{W}_{c}^{T}\Gamma^{-1}\dot{\Gamma}\Gamma^{-1}\tilde{W}_{c}\\+\tilde{W}_{a}^{T}\dot{\tilde{W}}_{a},
\end{equation}
where $\dot{\tilde{W}}_{c}$ = $ -\dot{\hat{W}}_{c}$, and $\dot{\tilde{W}}_{a}$ =  $-\dot{\hat{W}}_{a}$. \\
Provided the extrapolation state variables are selected such that $z^{k} \in \overline{B}(0,\chi)$, $\forall k = 1,\hdots,N$, the orbital derivative in \eqref{theta} can be bounded as 
\begin{multline}\label{eq:thetadot}
       \dot{\Theta}\left(\tilde{W}_{c},\tilde{W}_{a},t\right) \leq -k_{c}\underline{c}\left\Vert \tilde{W}_{c}\right\Vert ^{2}-\left(k_{a1}+k_{a2}\right)\left\Vert \tilde{W}_{a}\right\Vert ^{2}\\+k_{c}\iota_{8}\overline{\epsilon}\left\Vert \tilde{W}_{c}\right\Vert +k_{c}\iota_{5}\left\Vert \tilde{W}_{a}\right\Vert ^{2}+\left(k_{c}\iota_{6}+k_{a1}\right)\left\Vert \tilde{W}_{c}\right\Vert \left\Vert \tilde{W}_{a}\right\Vert \\+\left(k_{c}\iota_{7}+k_{a2}\overline{W}\right)\left\Vert \tilde{W}_{a}\right\Vert, 
\end{multline}
for all $ t\geq0 $, where $ \iota_{5},\hdots,\iota_{8} $ are positive constants that are independent of the learning gains, $ \overline{W} $ denotes an upper bound on the norm of the ideal weights $ W $, and  \\ $ \underline{c_{3}} =  \inf_{t \geq 0} \lambda_{\min}\left\{\left(\frac{\beta}{2k_{c}}\Gamma^{-1}\left(t\right)+\frac{1}{2N}\sum_{k=1}^{N}\frac{\omega_{k}\omega_{k}^{T}}{\rho_{k}}\right)\right\}$. Assumption \ref{ass:CLBCADPLearnCond} and \eqref{eq:OFBADP1Gammabound} guarantee that $ \underline{c_{3}}>0 $.
From \eqref{eq:derivative_Lyapunov_function2e} we get, 
\begin{multline}\label{eq:derivative_Lyapunov_function2f}
\dot{V}_{se}\left(Z_{1},s,\tilde{s},\tilde{W}_{a}\right) \leq -{\alpha^3}\|\tilde{s}_{1}\|^{2} - (k-\varpi_{1}\varpi_{4})\|r\|^{2}   -(\beta_{1}-\alpha)\|\eta\|^{2} + \varpi_{1}\left(1+\varpi_{4}+\varpi_{4}\alpha\right) \|r\|\|\tilde{s}_{1}\|  \\+ \varpi_{1}\varpi_{4} \|r\| \|\eta\|  + \varpi_{2} \|r\| \|\tilde{W}_{a}\| +  \varpi_{3} \|r\|, 
\end{multline}
for all $ \hat{W}_{a} \in \mathbb{R}^{L}$, for all $ s \in \overline{B}(0,\chi)$, and for all $\hat{s} \in \overline{B}(0,\chi) $, where $ \varpi_{2},\varpi_{3} $ are positive constants that are independent of the learning gains and $\varpi_{1}$, $\varpi_{4}$ are the Lipschitz constants on $\overline{B}(0,\chi)$, for $F$ and $h$, respectively. 
\begin{thm}
    Provided Assumption \ref{ass:CostRestrictions}, Assumption \ref{ass:CLBCADPLearnCond}, the hypothesis of Lemma \ref{lem3}, and the hypothesis of Theorem \ref{thm:Optimal GAS} hold, the gains are selected large enough to ensure that \eqref{gain_condition} holds and the matrix $M+M^{T}$, is PD, and the weights $\hat{W}_{c}$, $\Gamma$, and $\hat{W}_{a}$ are updated according to \eqref{W_c}, \eqref{gamma}, and \eqref{W_a}, respectively, then the estimation errors $\tilde{W}_{c}$, $\tilde{W}_{a}$, and the trajectories of the transformed system in \eqref{eq:BTDynamics}, under the controller in \eqref{eq:ucontrol}, are locally uniformly ultimately bounded.
\end{thm}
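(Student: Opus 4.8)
The plan is to build a single composite Lyapunov function that superposes the three certificates already analyzed, namely
\[ V_L(Z,t) \coloneqq \mathcal{V}(s) + \Theta(\tilde{W}_c,\tilde{W}_a,t) + V_{se}(Z_1), \]
where $Z \coloneqq [s;\tilde{W}_c;\tilde{W}_a;Z_1]$ stacks the transformed state, the critic and actor weight-estimation errors, and the estimator error state $Z_1 = [\tilde{s}_1;r;\eta]$. Here $\mathcal{V}$ is the radially unbounded converse-Lyapunov function guaranteed by Theorem \ref{thm:Optimal GAS}, $\Theta$ is the weight-error function, and $V_{se}$ is the estimator certificate of Lemma \ref{lem3}; their orbital derivatives have already been bounded in \eqref{eq:Converse Lyapunov}, \eqref{eq:thetadot}, and \eqref{eq:derivative_Lyapunov_function2f}, respectively. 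First I would show that $V_L$ is sandwiched between two class-$\mathcal{K}$ functions of $\|Z\|$: the bounds \eqref{eq:OFBADP1Gammabound} on $\Gamma$ sandwich $\Theta$, and radial unboundedness of $\mathcal{V}$ together with the explicit quadratic form of $V_{se}$ supply the rest.

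Next I would form $\dot{V}_L = \dot{\mathcal{V}} + \dot{\Theta} + \dot{V}_{se}$ and substitute the three bounds. One reduction is needed first: \eqref{eq:Converse Lyapunov} involves $\|\tilde{s}\| = \|[\tilde{s}_1;\tilde{s}_2]\|$, whereas $V_{se}$ controls only $\tilde{s}_1$, $r$, and $\eta$. Using $r = \tilde{H}(s,\hat{s}) + \alpha\tilde{s}_1 + \eta$ and the local Lipschitz dependence of $H$ on $s_2$, I would bound $\|\tilde{s}_2\|$, hence $\|\tilde{s}\|$, by a constant multiple of $\|Z_1\|$ on $\overline{B}(0,\chi)$, so that every term of $\dot V_L$ is expressed through $\|\tilde{W}_c\|$, $\|\tilde{W}_a\|$, $\|\tilde{s}_1\|$, $\|r\|$, $\|\eta\|$ and the coercive term $-W(s)$. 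Collecting the negative quadratic terms and the bilinear cross terms, the quadratic part can be written as $-\tfrac12\, v^T (M+M^T) v$, where $v$ is the vector of these error norms and $M$ is the matrix named in the statement; the hypothesis $M+M^T \succ 0$ renders this part negative definite.

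The remaining cross terms --- the inter-subsystem couplings $\|r\|\|\tilde{W}_a\|$, $\|\tilde{s}\|\|\tilde{W}_a\|$, and $\|r\|\|\tilde{s}_1\|$, together with the scalar residuals proportional to $\overline{\epsilon}$ and $\overline{W}$ --- I would dominate by Young's inequality against the available negative $\|\cdot\|^2$ terms, which is exactly what the gain condition \eqref{gain_condition} enables. After this domination, the derivative collapses to $\dot V_L \leq -c\|Z\|^2 + \kappa\|Z\| + \varrho$ for positive constants $c,\kappa,\varrho$, with $\varrho$ vanishing as $\overline{\epsilon}\to 0$; the constant residual is precisely why only ultimate boundedness, rather than asymptotic stability, is obtained. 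Completing the square gives $\dot V_L < 0$ whenever $\|Z\|$ exceeds an explicit threshold, and the Lyapunov theorem for uniform ultimate boundedness \cite[Theorem 4.18]{SCC.Khalil2002} then yields the claimed local UUB of $\tilde{W}_c$, $\tilde{W}_a$, and the transformed trajectories.

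I expect the principal obstacle to be the simultaneous gain selection. Making $M+M^T$ PD and satisfying \eqref{gain_condition} pull several parameters ($k_c$, $k_{a_1}$, $k_{a_2}$, $\alpha$, $k$, $\beta_1$, $\gamma$, $\beta$) in competing directions, so the delicate step is to exhibit a nonempty set of gains for which all sign conditions hold at once. Compounding this, the bounds \eqref{eq:Converse Lyapunov}, \eqref{eq:thetadot}, and \eqref{eq:derivative_Lyapunov_function2f} hold only on $\overline{B}(0,\chi)$, so the argument is intrinsically local: I must exhibit a compact sublevel set of $V_L$ contained in the region of validity, verify it is positively invariant (which also supplies forward completeness of the solution), and ensure the ultimate bound lies inside it. This constrains the admissible initial conditions and feeds back into the gain choice. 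Once forward completeness and boundedness of the transformed trajectory are secured, Lemma \ref{lem:trajectoryRelation} transfers these properties --- and hence constraint satisfaction --- to the original coordinates.
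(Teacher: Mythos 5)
Your plan reproduces the paper's proof essentially step for step: the same composite Lyapunov function $V_{L}=\mathcal{V}+\Theta+V_{se}$, the same substitution of the three derivative bounds \eqref{eq:Converse Lyapunov}, \eqref{eq:thetadot}, \eqref{eq:derivative_Lyapunov_function2f}, the same reduction of $\left\Vert \tilde{s}\right\Vert$ to the estimator error states via the bound \eqref{eq:need_nu1_design} (the paper obtains it from Lipschitz continuity of the inverse map $h$, i.e., $s_{2}=h(s_{1},\dot{s}_{1})$, rather than of $H$ in $s_{2}$ as you phrase it), the same quadratic form in $z=\begin{bmatrix}\left\Vert \tilde{W}_{c}\right\Vert & \left\Vert \tilde{W}_{a}\right\Vert & \left\Vert \tilde{s}_{1}\right\Vert & \left\Vert r\right\Vert & \left\Vert \eta\right\Vert\end{bmatrix}^{T}$ with the matrix $M$, the same class-$\mathcal{K}$ sandwich from \eqref{eq:OFBADP1Gammabound}, and the same conclusion via \cite[Theorem 4.18]{SCC.Khalil2002}. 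The only mislabel is that in the paper the bilinear cross terms are absorbed into $M+M^{T}$ (its positive definiteness, not \eqref{gain_condition}, handles them), while \eqref{gain_condition} plays precisely the domain-fit role you correctly identify in your last paragraph, namely ensuring the ultimate bound $\mu$ lies inside the sublevel set contained in the region $\overline{B}(0,\bar{\chi})\subseteq\mathcal{C}\times\mathbb{R}^{2L}$ where the bounds are valid.
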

\begin{proof}
The candidate Lyapunov function for the closed-loop system is selected as
\begin{equation}\label{eq:candidateLyapunovfunction}
V_{L}\left(Z,t\right)\coloneqq \mathcal{V}\left(s\right)+\Theta\left(\tilde{W}_{c},\tilde{W}_{a},t\right)+V_{se}\left(Z_{1}\right), 
\end{equation}
where $ Z\coloneqq\begin{bmatrix}
s^{T}&\tilde{s}_{1}^{T}&r^{T}&\eta^{T}&\tilde{W}_{c}^{T}&\tilde{W}_{a}^{T}
\end{bmatrix}^{T}.$ 

Let $\mathcal{C} \subset \mathbb{R}^{5n}$ be a compact set defined as
\[
    \mathcal{C} \coloneqq \left\{(s,\tilde{s}_{1},\eta,r) \in \mathbb{R}^{5n} \mid \begin{gathered}\|s\| + \|\tilde{s}_{1}\|(1+\varpi_{4}(1+\alpha)) + \varpi_{4}(\|r\|  \|+ \|\eta\|) \leq \chi\end{gathered}\right\}.
\]
Using \eqref{eq:need_nu1_design}, whenever, $(s,\tilde{s}_{1},\eta,r) \in \mathcal{C}$, it can be concluded that $s,\hat{s} \in \overline{B}(0,\chi)$. As a result, \eqref{eq:Converse Lyapunov}, \eqref{eq:thetadot}, and \eqref{eq:derivative_Lyapunov_function2f} imply that whenever $Z \in \mathcal{C} \times \mathbb{R}^{2L}$, the orbital derivative of the candidate Lyapunov function along the trajectories of \eqref{eq:BTDynamics}, \eqref{eq:baseEqx}, \eqref{eq:7}, \eqref{eq:8}, \eqref{W_c}, \eqref{gamma}, \eqref{W_a}, under the controller \eqref{eq:ucontrol}, can be bounded as
 \begin{align*}
\dot{V}_{L}\left(Z,t\right) \leq -W\left(s\right)+\iota_{1}\overline{\epsilon}+\iota_{2}\left\Vert \tilde{s}\right\Vert \left\Vert \tilde{W}_{a}\right\Vert +\iota_{3}\left\Vert \tilde{W}_{a}\right\Vert  -k_{c}\underline{c}\left\Vert \tilde{W}_{c}\right\Vert ^{2}-\left(k_{a1}+k_{a2}\right)\left\Vert \tilde{W}_{a}\right\Vert ^{2}\nonumber\\ +k_{c}\iota_{8}\overline{\epsilon}\left\Vert \tilde{W}_{c}\right\Vert +\iota_{4}\left\Vert \tilde{s}\right\Vert  +k_{c}\iota_{5}\left\Vert \tilde{W}_{a}\right\Vert ^{2}+\left(k_{c}\iota_{6}+k_{a1}\right)\left\Vert \tilde{W}_{c}\right\Vert \left\Vert \tilde{W}_{a}\right\Vert +\left(k_{c}\iota_{7}+k_{a2}\overline{W}\right)\left\Vert \tilde{W}_{a}\right\Vert\nonumber\\   -{\alpha^3}\|\tilde{s}_{1}\|^{2} - (k-\varpi_{1}\varpi_{4})\|r\|^{2}  - (\beta_{1}-\alpha)\|\eta\|^{2} + \varpi_{1} \|r\|\|\tilde{s}_{1}\|  +\varpi_{1}\varpi_{4} \|r\| \|\tilde{s}_{1}\| \nonumber\\ +\varpi_{1}\varpi_{4}\alpha\| \|r\| \tilde{s}_{1}\| + \varpi_{1}\varpi_{4} \|r\| \|\eta\|   + \varpi_{2} \|r\| \|\tilde{W}_{a}\| +  \varpi_{3} \|r\|,
\end{align*}
which can be re-expressed as,
\begin{align*}
\dot{V}_{L}\left(Z,t\right) \leq -W\left(s\right)-k_{c}\underline{c}_{3}\left\Vert \tilde{W}_{c}\right\Vert ^{2} -\left(k_{a1}+k_{a2}-k_{c}\iota_{5}\right)\left\Vert \tilde{W}_{a}\right\Vert ^{2} -{\alpha^3} \left\Vert \tilde{s}_{1} \right\Vert ^{2} - (k-\varpi_{1}\varpi_{4})\|r\|^{2} \nonumber\\ - (\beta_{1}-\alpha)\|\eta\|^{2}
 +\left(k_{c}\iota_{6}+k_{a1}\right)\left\Vert \tilde{W}_{c}\right\Vert \left\Vert \tilde{W}_{a}\right\Vert+\iota_{2}(1 + \varpi_{4} + \varpi_{4}\alpha)\|\tilde{s}_{1}\|\left\Vert \tilde{W}_{a}\right\Vert\nonumber\\ + \bigg(\iota_{2}\varpi_{4}+\varpi_{2}\bigg)\|r\|\left\Vert \tilde{W}_{a}\right\Vert + \iota_{2}\varpi_{4}\|\eta\|\left\Vert \tilde{W}_{a}\right\Vert
+ (1+\varpi_{4}+\varpi_{4}\alpha)\varpi_{1} \|r\|\|\tilde{s}_{1}\|+ \varpi_{1}\varpi_{4} \|r\| \|\eta\|\nonumber\\ + \iota_{4}\varpi_{4}\|\eta\|+ (\varpi_{3}+\iota_{4}\varpi_{4})\|r\|   +\bigg(\iota_{3}+k_{c}\iota_{7}+k_{a2}\overline{W}\bigg)\left\Vert \tilde{W}_{a}\right\Vert +k_{c}\iota_{8}\overline{\epsilon}\left\Vert \tilde{W}_{c}\right\Vert \nonumber\\ +\iota_{4}(1 + \varpi_{4} + \varpi_{4}\alpha)\|\tilde{s}_{1}\|+\iota_{1}\overline{\epsilon},
\end{align*}
which can be re-expressed as,
\begin{multline*}
\dot{V}_{L}\left(Z,t\right) \leq -W\left(s\right)-k_{c}\underline{c}_{3}\left\Vert \tilde{W}_{c}\right\Vert ^{2} -{\alpha^3} \left\Vert \tilde{s}_{1} \right\Vert ^{2} - (\beta_{1}-\alpha)\|\eta\|^{2} 
- (k-\varpi_{1}\varpi_{4})\|r\|^{2}\\ -\left(k_{a1}+k_{a2}-k_{c}\iota_{5}\right)\left\Vert \tilde{W}_{a}\right\Vert ^{2}
+\left(k_{c}\iota_{6}+k_{a1}\right)\left\Vert \tilde{W}_{c}\right\Vert \left\Vert \tilde{W}_{a}\right\Vert + \iota_{2}(1 + \varpi_{4} + \varpi_{4}\alpha)\|\tilde{s}_{1}\|\left\Vert \tilde{W}_{a}\right\Vert\\
+ \bigg(\iota_{2}\varpi_{4}+\varpi_{2}\bigg)\|r\|\left\Vert \tilde{W}_{a}\right\Vert + \iota_{2}\varpi_{4}\|\eta\|\left\Vert \tilde{W}_{a}\right\Vert + \iota_{4}\varpi_{4}\|\eta\|
+ (1+\varpi_{4}+\varpi_{4}\alpha)\varpi_{1} \|r\|\|\tilde{s}_{1}\|+ \varpi_{1}\varpi_{4} \|r\| \|\eta\|\\
+ (\varpi_{3}+\iota_{4}\varpi_{4})\|r\| +\bigg(\iota_{3}+k_{c}\iota_{7}+k_{a2}\overline{W}\bigg)\left\Vert \tilde{W}_{a}\right\Vert +k_{c}\iota_{8}\overline{\epsilon}\left\Vert \tilde{W}_{c}\right\Vert  +\iota_{4}(1 + \varpi_{4} + \varpi_{4}\alpha)\|\tilde{s}_{1}\|+\iota_{1}\overline{\epsilon},
\end{multline*}
which yields
\begin{equation*}
\dot{V}_{L}\left(Z,t\right)\leq-W\left(s\right)-z^{T}\left(\frac{M+M^{T}}{2}\right)z+Pz+\iota_{1}\overline{\epsilon},
\end{equation*}where $ z\coloneqq \begin{bmatrix}\left\Vert \tilde{W}_{c}\right\Vert  & \left\Vert \tilde{W}_{a}\right\Vert  & \left\Vert \tilde{s}_{1}\right\Vert  & \left\Vert r\right\Vert  & \left\Vert \eta\right\Vert \end{bmatrix}^{T}$, 
and the matrices $P \coloneqq \begin{bmatrix}
     k_{c}\iota_{8}\overline{\epsilon}\\ \left(k_{c}\iota_{7}+\iota_{3}+k_{a2}\overline{W}\right) \\ \iota_{4}\left(1+\varpi_{4}+\varpi_{4}\alpha\right) \\ \left(\varpi_{3}+\iota_{4}\varpi_{4}\right) \\ \iota_{4}\varpi_{4}
\end{bmatrix}^{T}$ and 

$M \coloneqq \begin{bmatrix}
    k_{c}\underline{c}_{3}&0& 0&0& 0 \\ 
    -\left(k_{c}\iota_{6}+k_{a1}\right)&\left(k_{a1}+k_{a2}-k_{c}\iota_{5}\right)& 0&0&0\\
    0&-\iota_{2}\left(1+\varpi_{4}+\varpi_{4}\alpha\right)&{\alpha^3}&0&0\\
    0&-\left(\iota_{2}\varpi_{4}+\varpi_{2}\right)&-\varpi_{1}(1+\varpi_{4}+\varpi_{4}\alpha)&\left(k- \varpi_{1}\varpi_{4}\right)&0\\
    0&-\iota_{2}\varpi_{4}& 0& -\varpi_{1}\varpi_{4}& \left(\beta_{1}-\alpha\right)
    \end{bmatrix}^{T}$.
Provided the matrix $ M+M^{T}$  is PD,
\begin{equation*}
\dot{V}_{L}\left(Z,t\right)\leq-W\left(s\right)-\underline{M}\left\Vert z\right\Vert ^{2}+\overline{P}\left\Vert z\right\Vert +\iota_{1}\overline{\epsilon},
\end{equation*}where $ \underline{M} \coloneqq \lambda_{\min}\left\{\frac{M+M^{T}}{2}\right\}$, $\overline{P}= \|P\|_{\infty}$. Letting $ \underline{M}\eqqcolon\underline{M}_{1}+\underline{M}_{2}$, and letting $\mathcal{W}:\mathbb{R}^{5n+2L} \to \mathbb{R} $ be defined as $ \mathcal{W}\left(Z\right)=-W\left(s\right)-\underline{M}_{1}\left\Vert z\right\Vert^{2}$, the orbital derivative can be bounded as 
\begin{equation}
\dot{V}_{L}\left(Z,t\right)\leq-\mathcal{W}\left(Z\right),\label{eq:OFBADPVDotBound}
\end{equation}
$\forall \left\Vert Z\right\Vert >\frac{1}{2}\left(\frac{\overline{P}}{\underline{M}_{2}}+\sqrt{\frac{\overline{P}^{2}}{\underline{M}_{2}^{2}}+\frac{\iota_{1}^{2}\overline{\epsilon}^{2}}{\underline{M}_{2}^{2}}}\right)\eqqcolon \mu, \forall Z\in\overline{B}\left(0,\bar {\chi}\right)$, for all $ t\geq 0 $, and some $\bar{\chi}$ such that $\bar B(0,\bar{\chi}) \subseteq \mathcal{C} \times \mathbb{R}^{2L}$. 

Using the bound in \eqref{eq:OFBADP1Gammabound} and the fact that the converse Lyapunov function is guaranteed to be time-independent, radially unbounded, and PD, \cite[Lemma 4.3]{SCC.Khalil2002} can be invoked to conclude that \begin{equation}
\underline{v}\left(\left\Vert Z\right\Vert \right)\leq V_{L}\left(Z,t\right)\leq\overline{v}\left(\left\Vert Z\right\Vert \right),\label{eq:OFBADPVBound}
\end{equation}
for all $t \in \mathbb{R}_{\geq 0}$ and for all $Z\in\mathbb{R}^{5n+2L}$, where $\underline{v},\overline{v}:\mathbb{R}_{\geq 0}\rightarrow\mathbb{R}_{\geq 0}$ are class $\mathcal{K}$ functions.

Provided the learning gains, the domain radii $ \chi \ \text{and} \ \bar{\chi} $, and the basis functions for function approximation are selected such that $ M+M^{T} $ is PD and 
\begin{equation} \label{gain_condition}
    \mu<\overline{v}^{-1}\left(\underline{v}\left(0,\bar {\chi}\right)\right),
\end{equation}
then \cite[Theorem 4.18]{SCC.Khalil2002} can be invoked to conclude that Z is locally uniformly ultimately bounded. Since the estimates $\hat{W}_{a}$ approximate the ideal weights $ W $, the policy $ \hat{u} $ approximates the optimal policy $ u^{*} $.
\end{proof}
\section{Simulation}\label{Simulation}
To demonstrate the performance of the developed method, two simulations are provided. One for a two-state dynamical system and one for a four-state  dynamical  system corresponding to a two-link planar robot manipulator. 
\subsection{Two state dynamical system}\label{simsec1_SE}
The dynamical system is given by \eqref{eq:Dynamics}, 
where \begin{align} \label{sim_dyn}
f(x) = -x_{1}-\frac{1}{2}x_{2}\left(1-\left(\cos\left(2x_{1}\right)+2\right)^2\right), \quad
g(x) = \cos\left(2x_{1}\right)+2,
\end{align}
and $x_{1}$ is the measured output. 
The state $x$ = $[x_{1};x_{2}]$ needs to satisfy the constraints 
$x_{1} \in (a_{1},A_{1})$ and $x_{2} \in (a_{2},A_{2})$ where $a_{1}$ = -7, $A_{1}$ = 5,  $a_{2}$ = -5, $A_{2}$ = 7.
The objective is to synthesize the policy to minimize the infinite horizon cost in (\ref{cost function}), with $Q'(s) = s^{T}Qs$ where $Q = 10I_{2}$ and $R = 0.1$. The basis functions for value function approximation are selected as $\sigma(\hat{s}) = 
[\hat{s}_{1}^{2} ; \hat{s}_{1}\hat{s}_{2} ; \hat{s}_{2}^{2}]$. The initial conditions for the state, the estimated state, and the initial guesses for the weights are selected as $
    x(0) =  [-6;6],
    \hat{x}(0) =  [-6;4]$, $\Gamma(0)= I_{3}$, and $\hat{W}_{a}(0) = \hat{W}_{c}(0) = \left[10;\nicefrac{1}{2};\nicefrac{1}{2}\right]$ respectively.
The ideal values of the actor and the critic weights for the barrier-transformed optimal control problem are unknown. The simulation uses 100 fixed Bellman error extrapolation points in a 4$\times$4  square around the origin of the $s-$coordinate system.

\subsubsection{Results for the two state system} \label{ Result_sim1}

Fig.\ref{fig:original_state_SE_sim1} indicates that the system state, $x$, stays within the user-specified safe set while converging to the origin. As  seen  from Fig. \ref{fig: Estimation error_SE_sim1}, the state estimation errors also converge to the zero. The results in Fig. \ref{fig: Weight_EE_SE_sim1} show that the unknown weights for both the actor and critic converge to similar values. \\
As the ideal actor and critic weights are unknown, the estimates cannot be directly compared against the ideal weights. To gauge the quality of the estimates, the trajectory generated by the controller
\[
    u(t) = \hat{u}\left(\hat{s}(t),\hat{W}_{c}^{*}\right),
\]
\begin{table}
    \centering
    \caption{ Comparison of costs for a single trajectory of barrier transformed state variables\eqref{eq:Dynamics}, obtained using the optimal feedback controller generated via the developed method, and obtained using pseudospectral numerical optimal control software.}
\label{table:SE_sim1}
\begin{tabular}{p{6cm}p{2cm}} 
  \hline 
  Method & Cost \\
  \hline
  OF-SMBRL   & 55.82\\
  GPOPS II \cite{SCC.GPOPS}& 55.17 \\
  \hline
\end{tabular}
\end{table}
where $\hat{W}_{c}^{*}$ is the final value of the critic weights obtained in Fig. \ref{fig: Weight_EE_SE_sim1}, starting from a specific initial condition, is compared against the trajectory obtained using an \emph{offline} numerical solution computed using the GPOPS II optimization software \cite{SCC.GPOPS}. The total cost, generated by numerically integrating \eqref{cost function}, is used as the metric for comparison. The results in Table \ref{table:SE_sim1} indicate that the two solution techniques generate slightly different trajectories in the phase space (see Fig. \ref{fig: Comparison_Optimal_SE}). Moreover, as expected, the total cost of the trajectories from OF-SMBRL method is higher than the total cost of the trajectories from GPOPS. We hypothesize that the difference in costs can be attributed to the basis for value function approximation being unknown.

\begin{figure}
        \centering
		\includegraphics[width=0.75\columnwidth]{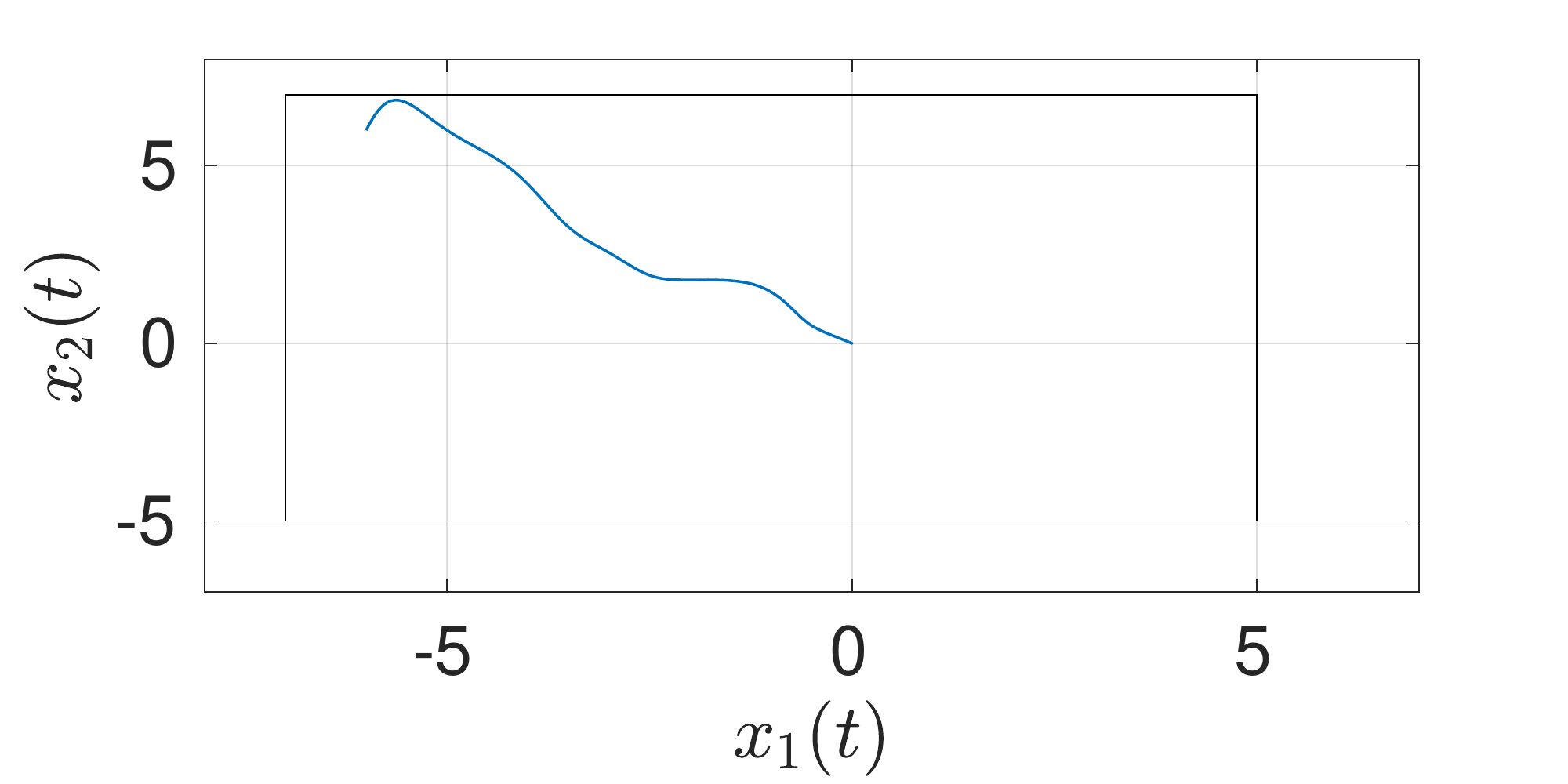}
		\caption{ Phase portrait for the two-state dynamical system using OF-SMBRL in the original coordinates. The boxed area represents the user-selected safe set.}
		\label{fig:original_state_SE_sim1}
\end{figure}

\begin{figure}
		\includegraphics[width=0.75\columnwidth]{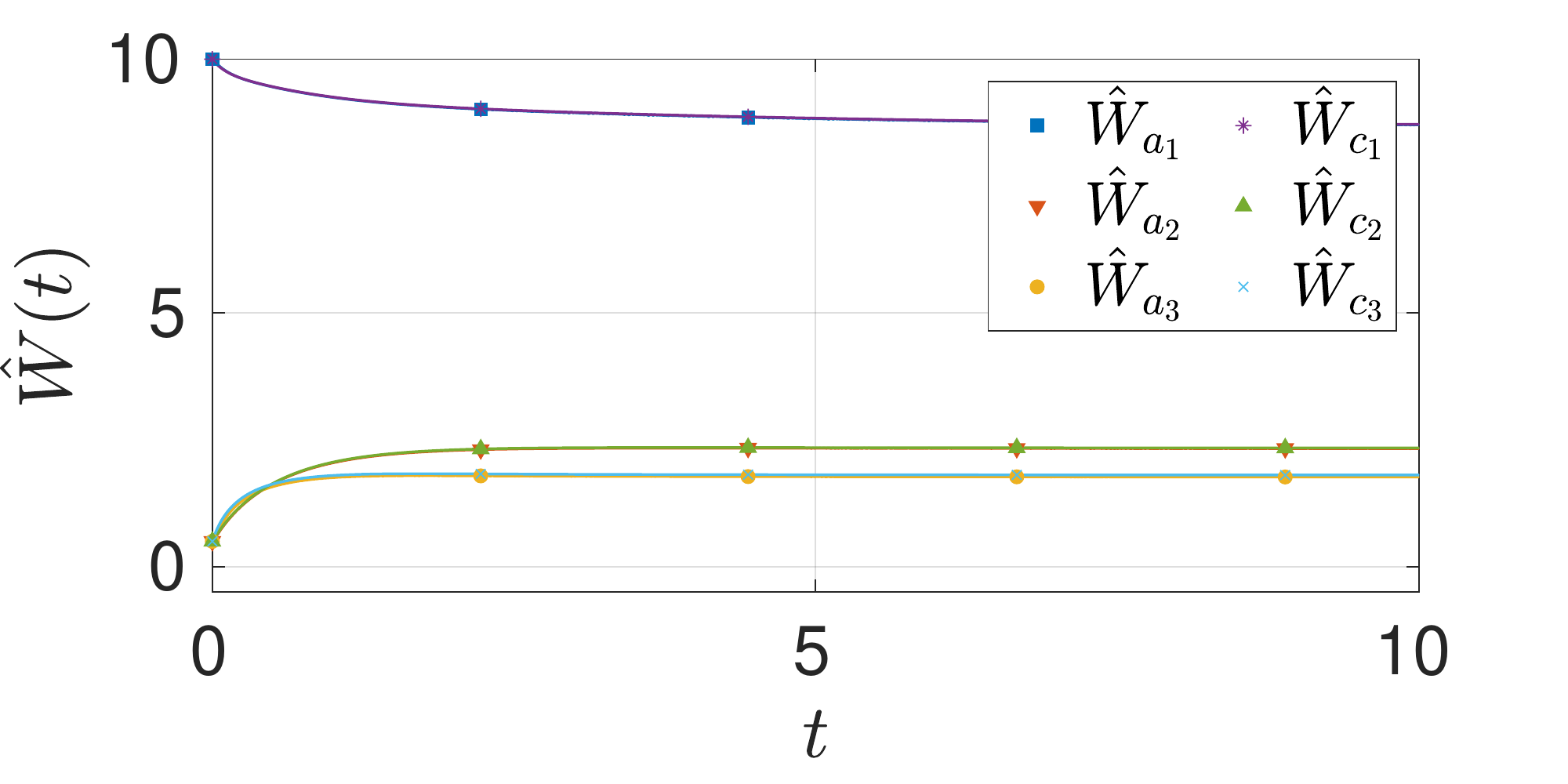}
		\caption{ Estimates of the actor and the critic weights under nominal gains from Table \ref{table:Sentivity_analysis_SE_sim1} for the two-state dynamical system.}
		\label{fig: Weight_EE_SE_sim1}
\end{figure}


\begin{figure}
		\includegraphics[width=0.75\columnwidth]{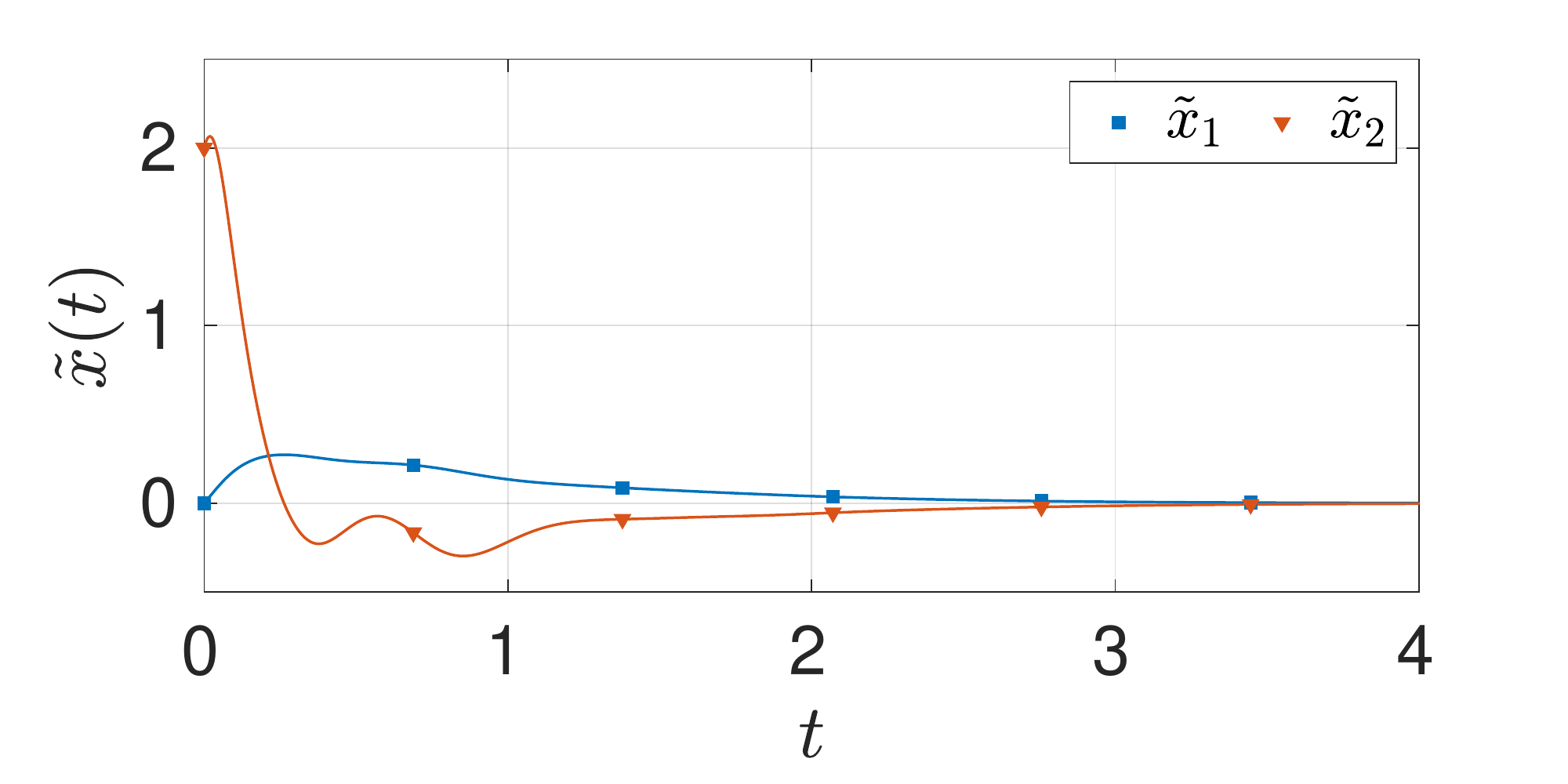}
		\caption{ Estimation errors between the original state variables and the estimated state variables under nominal gains from \eqref{table:SE_sim1} for the two-state dynamical system.}
		\label{fig: Estimation error_SE_sim1}
\end{figure}

\begin{figure}
		\includegraphics[width=0.75\columnwidth]{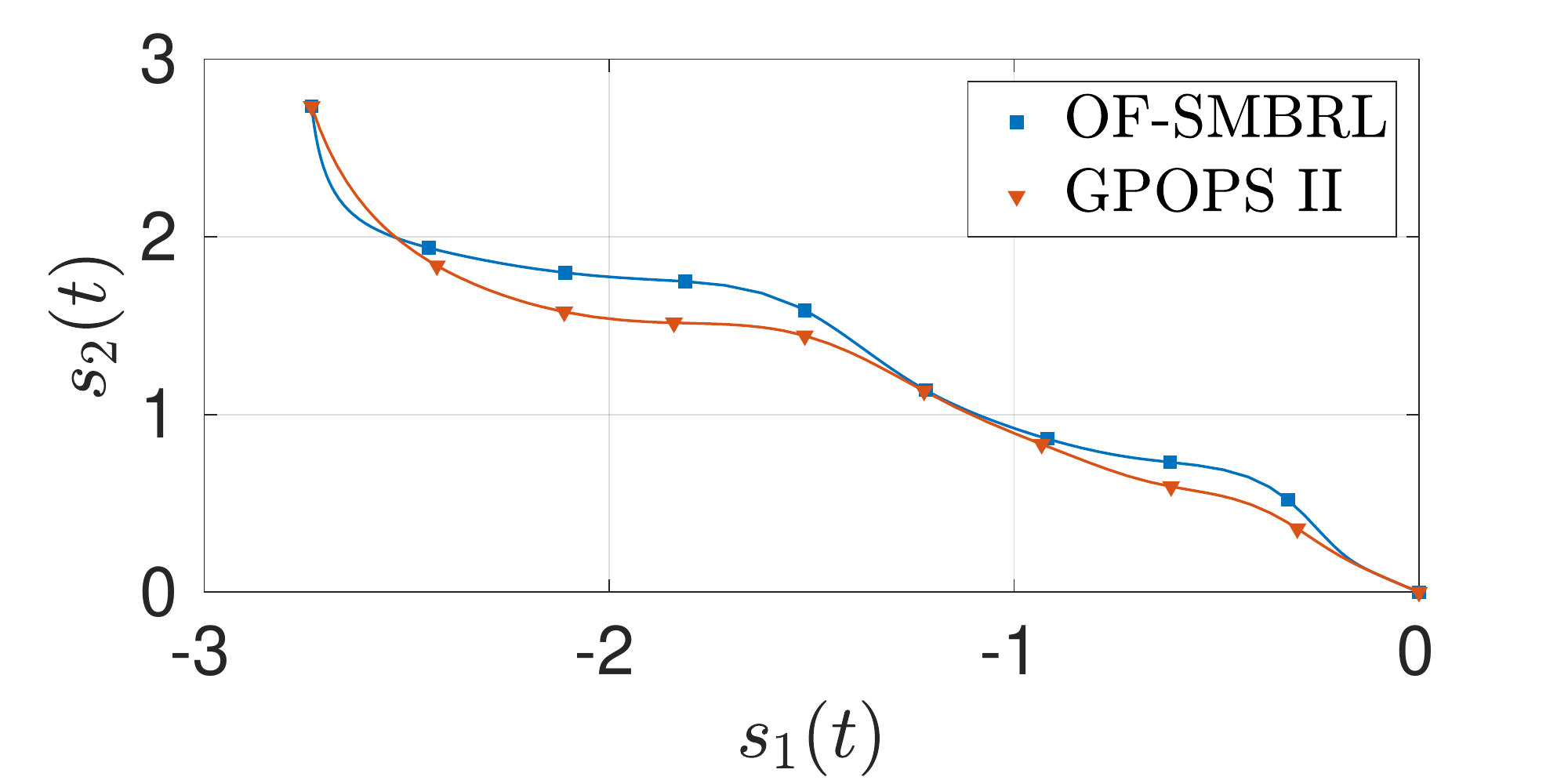}
		\caption{ Comparison of the optimal trajectories obtained using GPOPS II and OF-SMBRL with learned optimal weights for the two-state dynamical system.}
		\label{fig: Comparison_Optimal_SE}
\end{figure}

\subsubsection{Sensitivity Analysis for the two state system} \label{sen_sim1}
To study sensitivity of the developed technique to changes in various tuning gains, a one-at-a-time sensitivity analysis is performed. The gains $k$, $\alpha$, $\beta_{1}$, $k_{c}$, $k_{a1}$, $k_{a2}$, $\beta$, and $v$ are selected for the sensitivity analysis. The costs of the trajectories, under the optimal feedback controller obtained using the developed method, are presented in Table \ref{table:Sentivity_analysis_SE_sim1} for 5 different values of each gain. The gains are varied in a neighborhood of the nominal values (selected through trial and error\footnote{If the costs for different values of a particular gain are identical, then the value which yields the fastest computation is selected as the nominal value for that gain.}) $k = 10$, $\alpha = 1$, $\beta_{1} = 5$, $k_{c} = 5$, $k_{a1} = 100$, $k_{a2} = 0.1$, $\beta = 1$, and $v = 1$. 

The results in Table \ref{table:Sentivity_analysis_SE_sim1} indicate that the developed method is robust to small changes in the learning gains. 

\begin{table}
    \centering
    \caption{ Sensitivity Analysis for the two state system. The gains are varied in a neighborhood of the nominal values (selected through trial and error to balance performance and computation speed) $k = 10$, $\alpha = 1$, $\beta_{1} = 5$, $k_{c} = 5$, $k_{a1} = 100$, $k_{a2} = 0.1$, $\beta = 1$, $v = 1$ , and DS indicates diverged simulation.}
\label{table:Sentivity_analysis_SE_sim1}
\begin{tabular}{p{2cm}p{2cm}p{2cm}p{2cm}p{2cm}p{2cm}p{2cm}} 
  \hline 
 $k_{c}$=  & 1 & 1.5 &  5 & 50  & 60   \\
 \hline
 Cost   & DS & 55.822 &  55.820 & 56.19 & DS \\ 
  \hline
   \hline
   $k_{a_1}$=  & 40 & 50 & 100 & 1500 & 2000 \\
 \hline
 Cost & DS &  55.83 & 55.820 & 55.83 & DS \\
  \hline
   \hline
   $k_{a_2}$=  & 0.0001 & 0.001 & 0.1   & 10 & 15 \\
 \hline
 Cost   &55.820 &  55.820&  55.820 &  55.84 &  DS\\
  \hline 
  \hline
   $\beta$=  & 0.001 & 0.1 & 0.1  & 20 & 50 \\
 \hline
 Cost   &55.821 &  55.822&  55.820 &  55.83 &  DS\\
  \hline
   \hline
   $v$=  & 0.1 &  0.5 & 1  & 3 & 5\\
 \hline
 Cost & DS & 55.85 &  55.820 & 55.821  & DS\\
  \hline
  \hline
      $k$=  & 5 &  7  & 10   & 200  & 250 \\
 \hline
 Cost & DS & 55.821 & 55.820 & 55.821 & DS  \\
  \hline
  \hline
      $\alpha$= & 0.0001 &  0.001  & 1   & 10  & 15 \\
 \hline
Cost & DS  & 55.84 & 55.820 & 55.821 & DS \\
  \hline
  \hline
      $\beta_{1}$=  & 0.001 & 1 & 5  & 200 & 400 \\
 \hline
Cost & 55.820  & 55.820  & 55.820  & 55.822  & DS  \\
  \hline
   \hline
\end{tabular}
\end{table}

\subsection{Four state dynamical system}\label{simsec2} 

The four-state dynamical system corresponding to a two-link planar robot manipulator is given by
\begin{align}\label{eq:sim2_SE}
    \dot {x}_{1} = x_{2}, \quad
    \dot {x}_{2} = f(x)+g(x)u,
\end{align}
where $x_{1} \in \mathbb{R}^2$ is the measured output, $x_{2} \in \mathbb{R}^2$, $u \in \mathbb{R}^2$ and 
{\medmuskip=0mu\begin{gather*}
f(x) = -M_{a}^{-1}\Bigg(V_{M}\begin{bmatrix}
 x_{2_{1}}\\x_{2_{2}}\end{bmatrix}+\begin{bmatrix}
 f_{d_{1}}x_{2_{1}}+f_{s_{1}}\tanh(x_{2_{1}})\\f_{d_{2}}x_{2_{2}}+f_{s_{2}}\tanh(x_{2_{2}})\end{bmatrix}\Bigg), \quad
 g(x) = \begin{bmatrix}\begin{bmatrix}
  0,0
 \end{bmatrix}^{T},\begin{bmatrix}
  0,0
 \end{bmatrix}^{T},(M_{a}^{-1})^{T}\end{bmatrix}^{T},\\
D \coloneqq \mathrm{diag}
\begin{bmatrix}
 x_{2_{1}}, x_{2_{2}}, \tanh(x_{2_{1}}), \tanh(x_{2_{2}})
\end{bmatrix},\\
M_{a} \coloneqq 
\begin{bmatrix}
 p_{1}+2p_{3}c_{2} & p_{2}+p_{3}c_{2} \\ p_{2}+p_{3}c_{2} & p_{2}
\end{bmatrix}, \quad
V_{M} \coloneqq 
\begin{bmatrix}
 -p_{3}s_{2}x_{2_{2}} &  -p_{3}s_{2}( x_{2_{1}}+ x_{2_{2}}) \\  p_{3}s_{2} x_{2_{1}} & 0
\end{bmatrix},  
\end{gather*}} with $s_{2} = \sin(x_{1_{2}})$, $c_{2} = \cos(x_{1_{2}})$, $p_{1} = 3.473$, $p_{2} = 0.196$, $p_{3} = 0.242$. The parameters are selected as $f_{d_{1}} = 5.3 , f_{d_{2}} = 1.1, f_{s_{1}} = 8.45, f_{s_{1}}= 2.35$.

The state $x$ = $[x_{1_{1}} \  \ x_{1_{2}} \  \ x_{2_{1}} \  \ x_{2_{2}}]^{T}$ corresponds to the angular positions and the angular velocities of the two links. The objective for the controller is to minimize the infinite horizon cost function in (\ref{cost function}), with $Q'(s) = s^{T}Qs$ where $Q = 10I_{4}$ and $R = I_{2}$ while satisfying the constraints, 
$x_{1_{1}} \in (-1,1)$; $x_{1_{2}} \in (-1,1)$; $x_{2_{1}} \in (-2,2)$; $x_{2_{2}} \in (-2,2)$.
The basis functions for value function approximation are selected as $\sigma(\hat{s}) = 
[\hat{s}_{1_{1}}\hat{s}_{2_{1}} ;\hat{s}_{1_{2}}\hat{s}_{2_{2}} ;\hat{s}_{2_{1}}\hat{s}_{1_{2}} ;\hat{s}_{2_{2}}\hat{s}_{1_{1}} ;\hat{s}_{1_{1}}\hat{s}_{1_{2}} ;\hat{s}_{2_{2}}\hat{s}_{2_{1}} ;\hat{s}_{1_{1}}^{2} ; \hat{s}_{1_{2}}^{2}; \\ \hat{s}_{2_{1}}^{2} ;\hat{s}_{2_{2}}^{2}]$.

The initial conditions for the state variables, estimated state variables, and the initial guesses for the weights are selected as  $x(0) =  [-0.5; -0.5; 1; 1]$, $\hat{x}(0) =  [-0.5; -0.5; 1.1; 1.1]$, $\Gamma(0)= 10I_{10}$, $\hat{W}_{a}(0) = \left[5;15;0;0;10;2;15;5;2;2\right]$, and $\hat{W}_{c}(0) = \left[15;15;0;0;12;2;15;8;2;2\right]$.
The ideal values of the actor and the critic weights are unknown. The simulation uses 625 fixed Bellman error extrapolation points in a 0.9$\times$0.9 square around the origin of the $s-$coordinate system.

\subsubsection{Results for the four state system} \label{Result_sim2}

As seen from Fig. \ref{fig:original_state_robot}, the system state stays within the user-specified safe set while converging to the origin. As demonstrated in Fig. \ref{fig: Estimation error_SE_sim2} the state estimates converge to the true values. \\
\begin{table}
    \centering
    \caption{ Costs for a single barrier transformed trajectory of \eqref{eq:sim2_SE}, obtained using the developed method, and using pseudospectral numerical optimal control software.}
\label{table:SE_sim2_robot}
\begin{tabular}{p{6cm}p{2cm}} 
  \hline 
  Method & Cost \\
  \hline
OF-SMBRL  &  15.27\\
 GPOPS II & 11.68 \\
  \hline
\end{tabular}
\end{table}
A comparison with offline numerical optimal control, similar to the procedure used for the two-state system, yields the results in Table \ref{table:SE_sim2_robot} indicate that the two solution techniques generate slightly different trajectories in the state space (see Fig. \ref{chap4:fig: Comparison_Optimal_robot}). Moreover, as expected, the total cost of the trajectories from OF-SMBRL method is higher than the total cost of the trajectories from GPOPS. We hypothesize that the difference in costs is due to the basis for value function approximation being unknown.
In summary, the newly developed method can achieve online synthesis of optimal feedback control through a SMBRL approach while estimating the unknown state variables in the system dynamics and ensuring safety guarantees in the original coordinates.

\begin{figure}
        \centering
		\includegraphics[width=0.75\columnwidth]{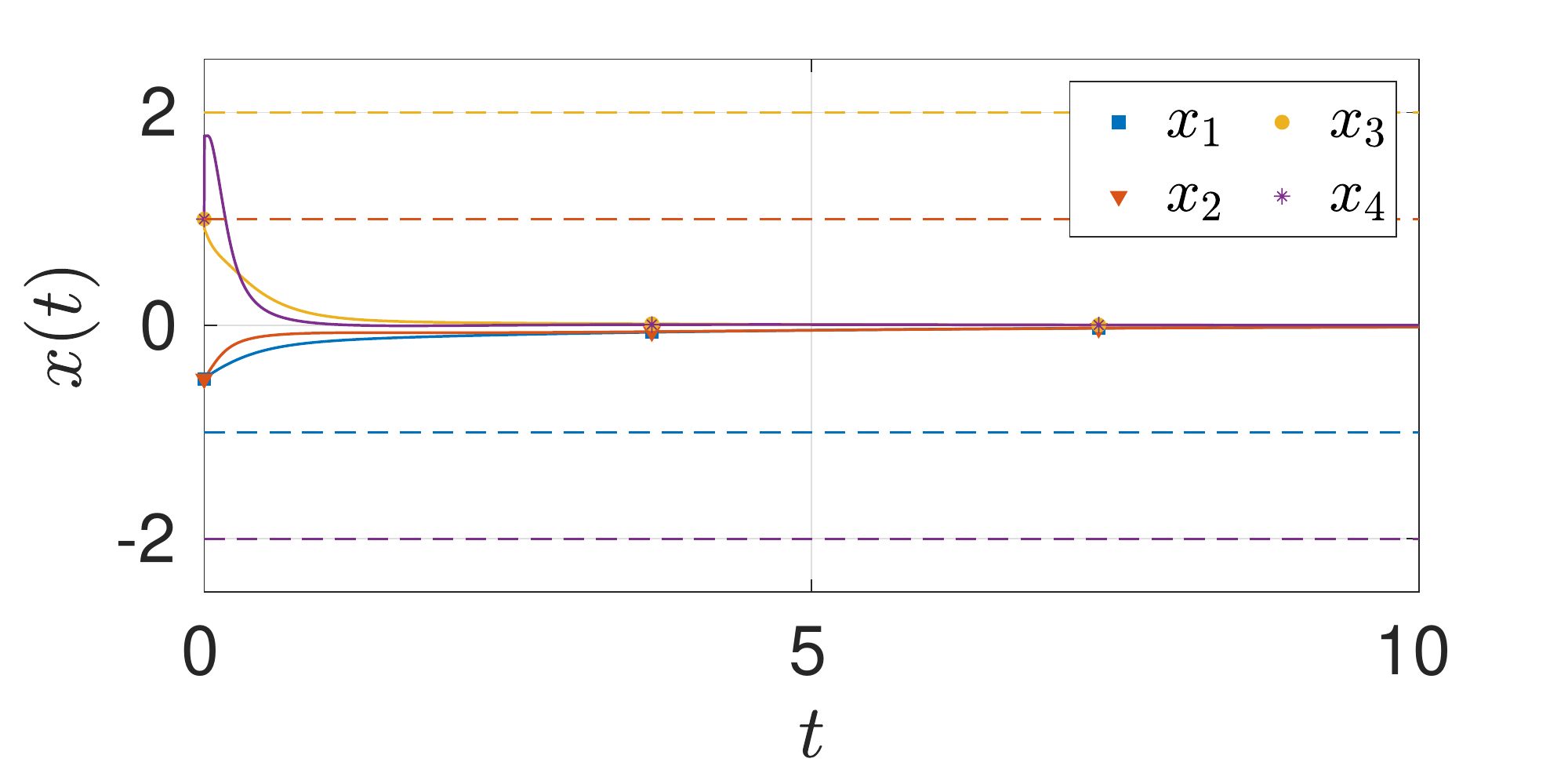}
		\caption {State trajectories for the four-state dynamical system using OF-SMBRL in the original coordinates. The dashed lines represent the user-selected safe set.} 
		\label{fig:original_state_robot}
\end{figure}

\begin{figure}
		\includegraphics[width=0.75\columnwidth]{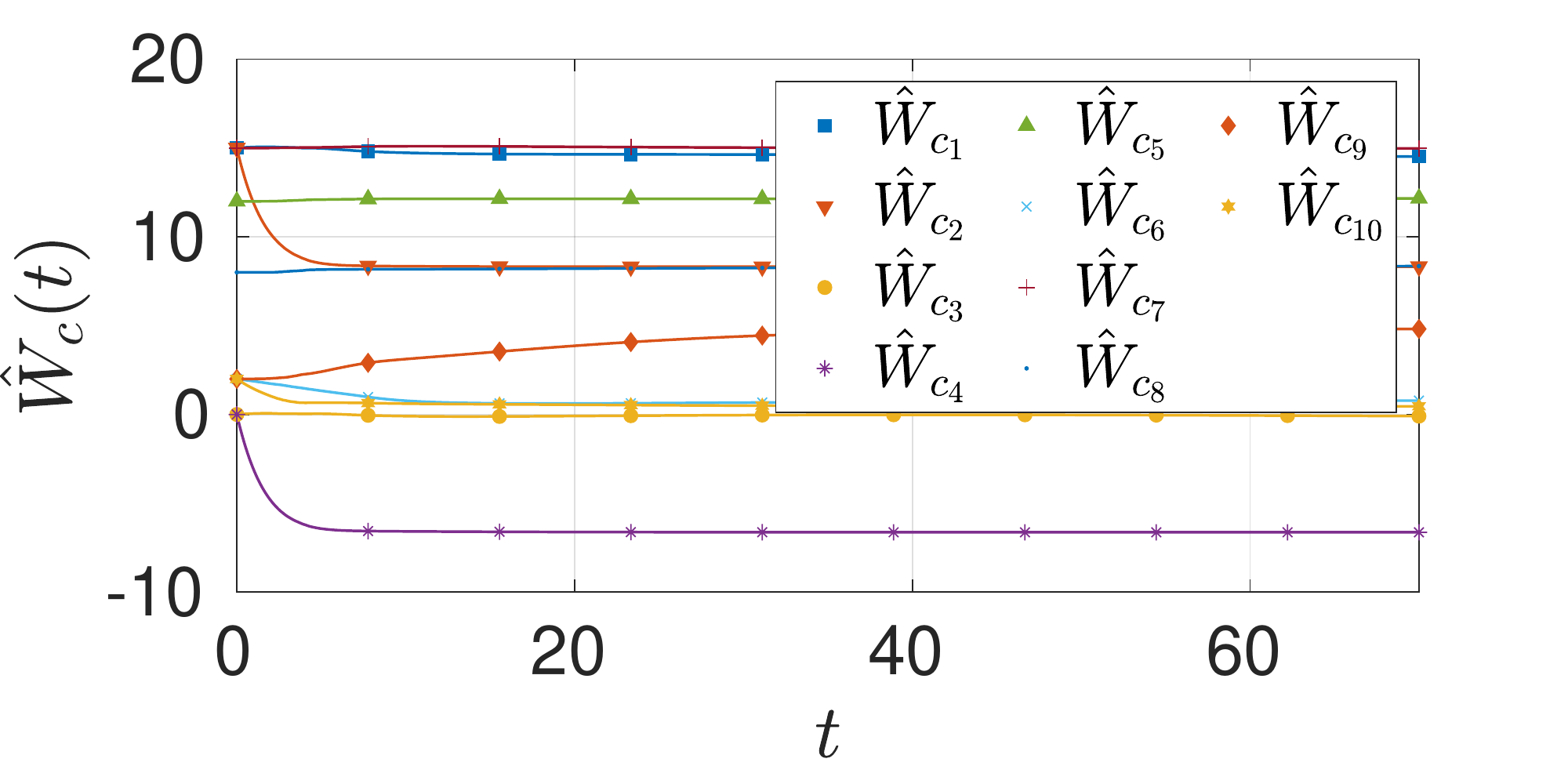}
		\caption{ Estimates of the critic weights under nominal gains for the four-state dynamical system.}
		\label{fig: Weight_SE_robot}
\end{figure}
\begin{figure}
		\includegraphics[width=0.75\columnwidth]{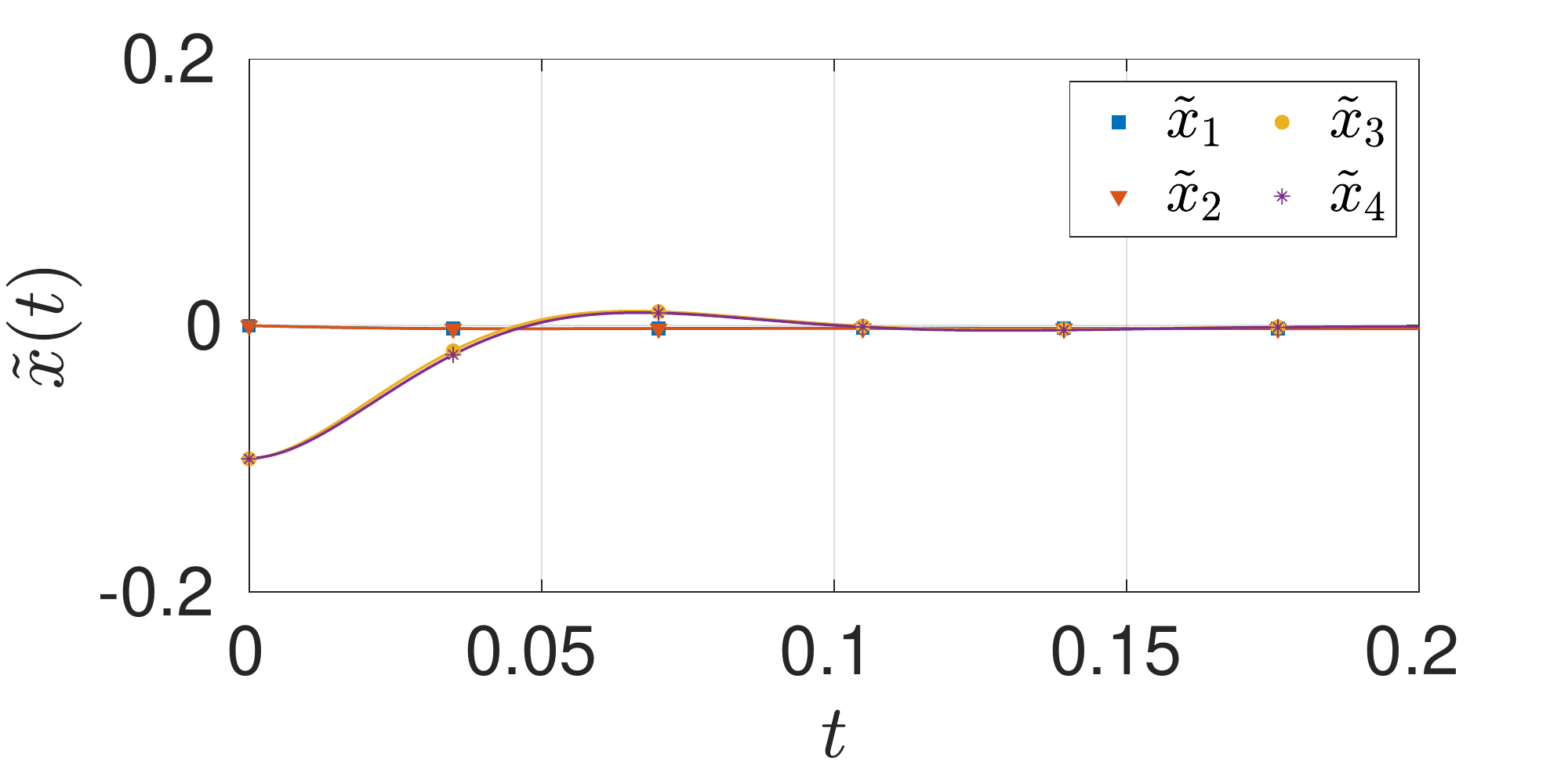}
		\caption{ State estimation errors under nominal gains for the four-state dynamical system.}
		\label{fig: Estimation error_SE_sim2}
\end{figure}


\begin{figure}
     \centering	
     \includegraphics[width=0.75\columnwidth]{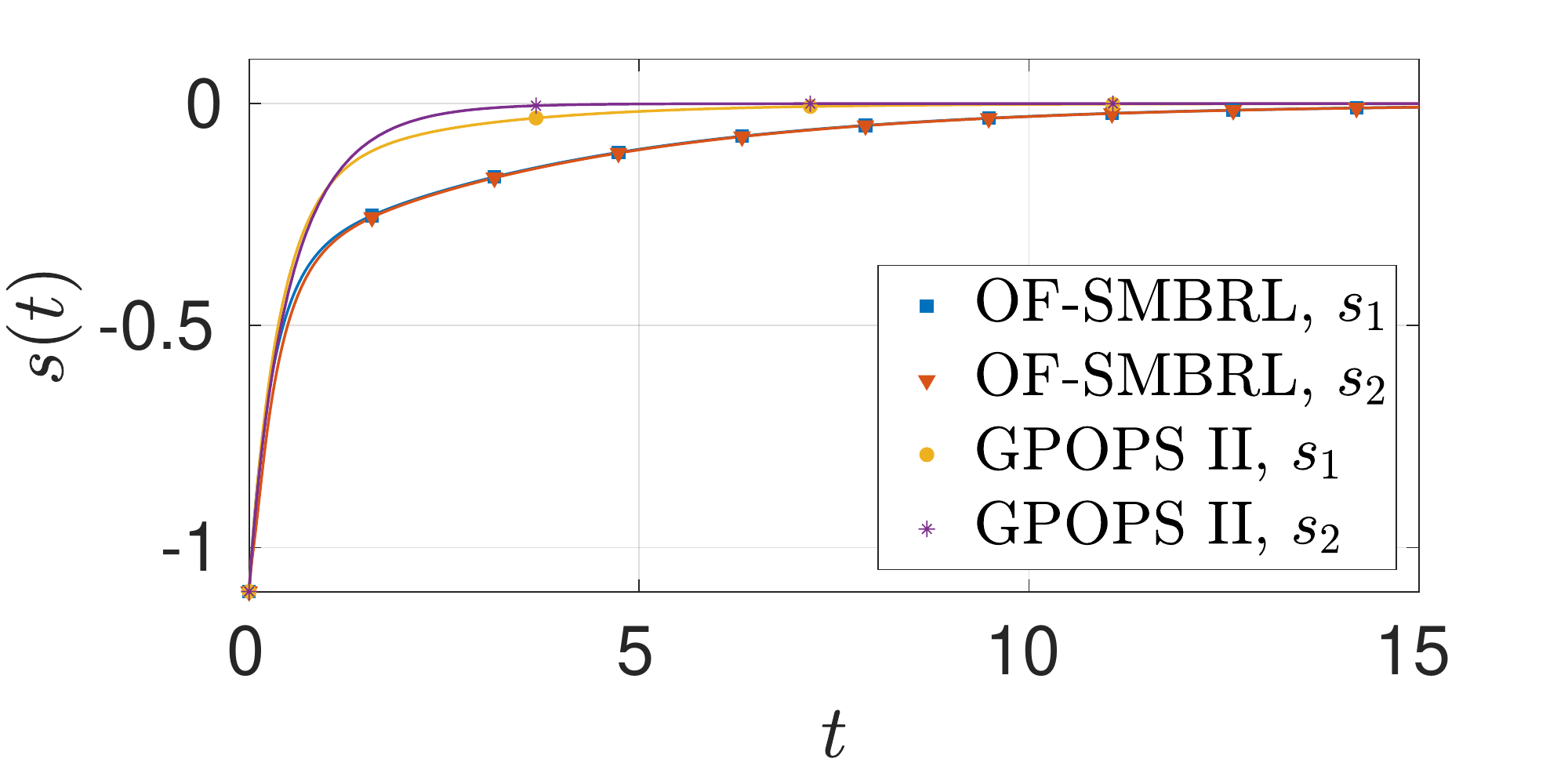}
     
     \includegraphics[width=0.75\columnwidth]{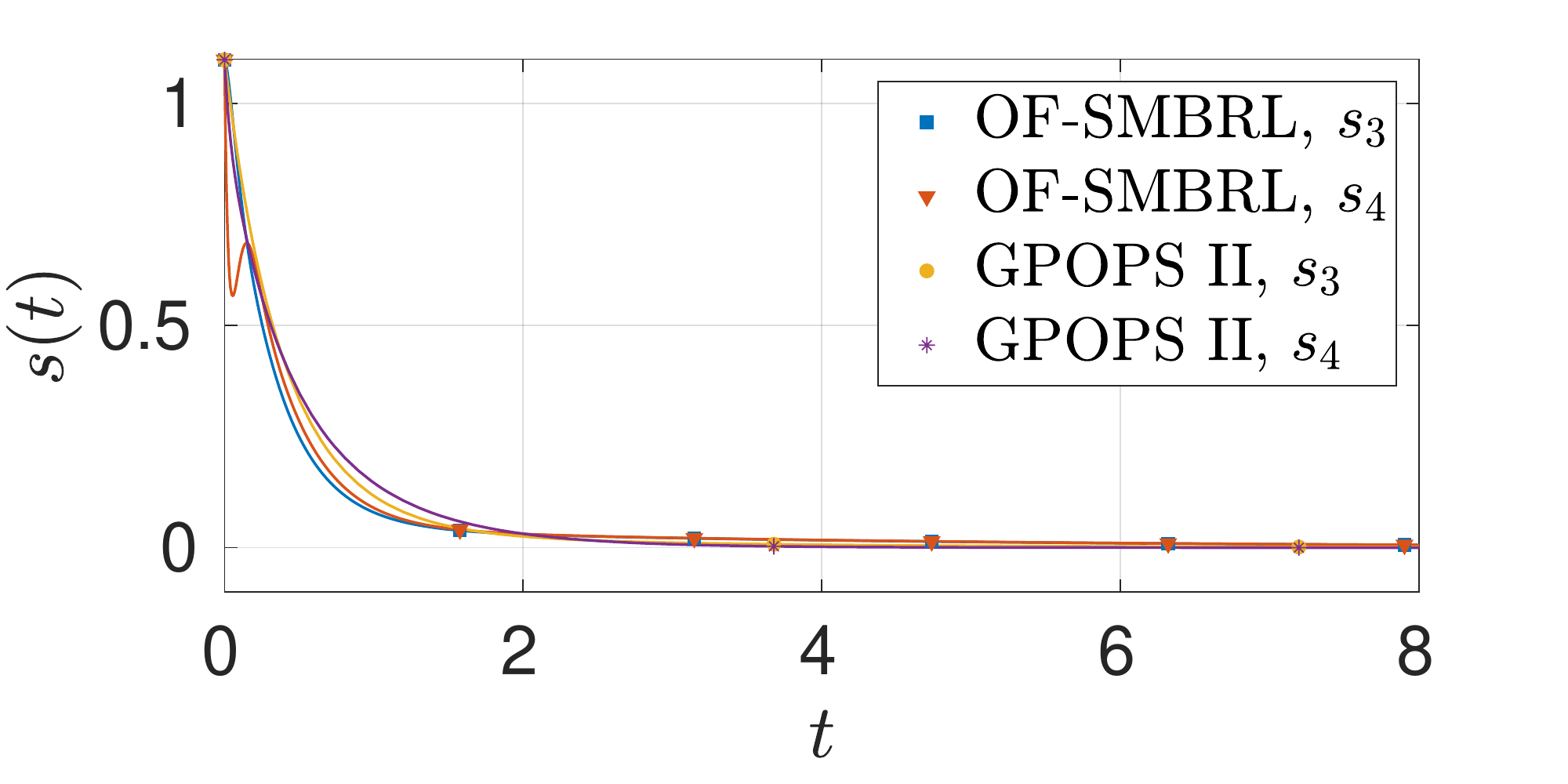}
     
	\caption{ Comparison of the optimal angular position (top) and angular velocity (bottom) trajectories obtained using GPOPS II and OF-SMBRL with fixed learned weights for the four-state dynamical system.}
    \label{chap4:fig: Comparison_Optimal_robot}
\end{figure}



The following section details a one-at-a-time sensitivity analysis and study the sensitivity of the developed technique to changes in various tuning gains.
\subsubsection{Sensitivity Analysis for the four state system}
The gains $k$, $\alpha$, $\beta_{1}$ ,$k_{c}$, $k_{a1}$, $k_{a2}$, $\beta$, and $v$ are selected for the sensitivity analysis. The costs of the trajectories, under the optimal feedback controller obtained using the developed method, are presented in Table \ref{table:SE_sim2} for 5 different values of each gain.
\begin{table}
    \centering
    \caption{ Sensitivity Analysis for the four state system. The gains are varied in a neighborhood of the nominal values (selected through trial and error to balance performance and computation speed) $k = 50$, $\alpha = 1$, $\beta_{1} = 10$, $k_{c} = 1000$, $k_{a1} = 100$, $k_{a2} = 0.5$, $\beta = 0.001$, $v = 500$; DS indicates diverged simulation.} 
\label{table:SE_sim2}
\begin{tabular}{p{2cm}p{2cm}p{2cm}p{2cm}p{2cm}p{2cm}p{2cm}} 
  \hline 
 $k_{c}$=  & 1 & 100 &  1000 & 2000  & 5000   \\
 \hline
 Cost   & \text{DS} & 16.70 &  15.27 & 15.28 & \text{DS} \\ 
  \hline
  \hline
  $k_{a_1}$=  & 10 & 50 & 100 & 250 & 500  \\
 \hline
 Cost & \text{DS} &  \text{DS} & 15.27 & 15.31 & 15.32 \\
  \hline
  \hline
  $k_{a_2}$=  & 0.01 & 0.1 & 0.5 & 1 & 2  \\
 \hline
 Cost   &15.28&  15.28& 15.27&  \text{DS} &  \text{DS} \\
  \hline 
  \hline
  $\beta$=  & 0.00001 & 0.0001 &  0.001 & 0.01   & 0.1   \\
 \hline
 Cost   &\text{DS}  & \text{DS} & 15.27 & 15.32&\text{DS}\\
  \hline
  \hline
  $v$=  & 200 &  300 & 500  & 600 & 1000 \\
 \hline
 Cost &\text{DS}&15.29 &  15.27 & \text{DS} & \text{DS}  \\
  \hline
  \hline
      $k$=  & 0.01 &  1 & 50  & 100 & 500 \\
 \hline
 Cost & \text{DS} &  15.27 &  15.27 &  15.27 & \text{DS}  \\
  \hline
  \hline
      $\alpha$= & 0.01 &  0.1 & 1  & 20 & 100 \\
 \hline
Cost &\text{DS} & \text{DS} & 15.27 & 15.27 & \text{DS} \\
  \hline
  \hline
      $\beta_{1}$= & 1 &  5 & 10  & 20 & 100 \\
 \hline
Cost & \text{DS} & 15.27 & 15.27 & 15.27 & \text{DS}  \\
  \hline
  \hline
\end{tabular}
\end{table}

The gains are varied in a neighborhood of the nominal values (selected through trial and error) $k_{c} = 1000$, $k_{a1} = 100$, $k_{a2} = 0.5$, $\beta = 0.001$, $v = 500$,  $k = 50$, $\alpha = 1$, and $\beta_{1} = 10$. The results in Table \ref{table:SE_sim2}  indicate that the developed method is not sensitive to small changes in the learning gains.

The results in Table \ref{table:Sentivity_analysis_SE_sim1} and Table \ref{table:SE_sim2} show that the developed method is not sensitive to small changes in the learning gains, which simplifies gain selection. On the other hand, the developed method is sensitive to selection of basis functions and the initial guess of the unknown weights as indicated by the local stability result. Since the dimensionality of the vector of unknown weights is high, a complete characterization of the region of attraction is computationally difficult. Therefore, the basis functions and the initial conditions are selected via trial and error in this paper.

To our knowledge, the technique developed in this paper is the first one to address safe online optimal control of nonlinear continuous-time systems under output feedback. As such, meaningful simulations to compare with existing methods were not possible. However, the performance of the developed method is verified by comparing the total cost of a trajectory starting from a given initial condition under the learned controller with the optimal cost starting from the same initial condition, computed using pseudospectral optimization (GPOPS-II \cite{SCC.GPOPS}) with full state feedback. Since the developed SMBRL technique relies on output feedback, the total cost for OF-SMBRL is expected to be higher than the optimal cost under state feedback, as confirmed by the simulation results in Table \ref{table:SE_sim1} and Table \ref{table:SE_sim2_robot}.






\section{Conclusion}

This paper presents a novel framework for online approximate optimal control of a class of safety-critical nonlinear systems. The framework consists of a novel state estimator and a safe MBRL based controller. A BT is used to transform the constrained optimal control problem into an unconstrained optimal control problem. MBRL is used to solve the problem online in the transformed coordinates in conjunction with the novel state estimator to estimate the transformed state variables. Regulation of the system state variables to a neighborhood of the origin and convergence of the estimated policy to a neighborhood of the optimal policy is established using a Lyapunov-based stability analysis. The OF-SMBRL controller is guaranteed to keep the state of the original system within the safety bounds.

The barrier function used in the BT to address safety can only constrain the state variables of the system to a box. A more generic and adaptive barrier function, constructed, perhaps, using sensor data, is a subject for future research. The barrier transformation method to ensure safety relies on knowledge of the dynamics of the system. In particular, the safety guarantees rely on the relationship  between trajectories of the original dynamics and the transformed system (Lemma \ref{lem:trajectoryRelation}) and the relationship  between the trajectories of the state estimator in the transformed and the original coordinates (Lemma \ref{lem2}). Both relationships fail to hold if a part of the dynamics is not included in the original model or if there are uncertain parameters in the model. While parametric uncertainties can be addressed under full state feedback (see, e.g., \cite{SCC.Mahmud.Nivison.eatoappear}), further research is needed to establish safety guarantees for the output feedback case with parametric uncertainties and/or unmodeled dynamics (for a differential games approach to robust safety, see \cite{SCC.Yang.Ding.ea2020}).

\bibliographystyle{IEEETrans.bst}
\bibliography{scc,sccmaster,scctemp}

\begin{thebibliography}{10}
\def\url#1{}
\csname url@samestyle\endcsname
\providecommand{\newblock}{\relax}
\providecommand{\bibinfo}[2]{#2}
\providecommand{\BIBentrySTDinterwordspacing}{\spaceskip=0pt\relax}
\providecommand{\BIBentryALTinterwordstretchfactor}{4}
\providecommand{\BIBentryALTinterwordspacing}{\spaceskip=\fontdimen2\font plus
\BIBentryALTinterwordstretchfactor\fontdimen3\font minus
  \fontdimen4\font\relax}
\providecommand{\BIBforeignlanguage}[2]{{%
\expandafter\ifx\csname l@#1\endcsname\relax
\typeout{** WARNING: IEEEtran.bst: No hyphenation pattern has been}%
\typeout{** loaded for the language `#1'. Using the pattern for}%
\typeout{** the default language instead.}%
\else
\language=\csname l@#1\endcsname
\fi
#2}}
\providecommand{\BIBdecl}{\relax}
\BIBdecl

\bibitem{SCC.Cohen.ea2020}
M.~H. {Cohen} and C.~{Belta}, ``Approximate optimal control for safety-critical
  systems with control barrier functions,'' in \emph{Proc. IEEE Conf. Decis.
  Control}, 2020, pp. 2062--2067.

\bibitem{SCC.Yang.Vamvoudakis.ea2019}
Y.~Yang, K.~G. Vamvoudakis, H.~Modares, W.~He, Y.-X. Yin, and D.~Wunsch,
  ``Safety-aware reinforcement learning framework with an actor-critic-barrier
  structure,'' in \emph{Proc. Am. Control Conf.}, 2019, pp. 2352--2358.

\bibitem{SCC.Greene.Deptula.ea2020}
M.~L. {Greene}, P.~{Deptula}, S.~{Nivison}, and W.~E. {Dixon}, ``Sparse
  learning-based approximate dynamic programming with barrier constraints,''
  \emph{IEEE Control Syst. Lett.}, vol.~4, no.~3, pp. 743--748, 2020.

\bibitem{SCC.Mahmud.Nivison.eatoappear}
S.~M.~N. Mahmud, S.~A. Nivison, Z.~I. Bell, and R.~Kamalapurkar, ``Safe
  model-based reinforcement learning for systems with parametric
  uncertainties,'' \emph{Frontiers in Robotics and AI}, to appear, see
  {arXiv}:2007.12666.

\bibitem{SCC.Sutton.Barto1998}
R.~S. Sutton and A.~G. Barto, \emph{Reinforcement learning: an
  introduction}.\hskip 1em plus 0.5em minus 0.4em\relax Cambridge, MA, USA: MIT
  Press, 1998.

\bibitem{SCC.Doya2000}
K.~Doya, ``Reinforcement learning in continuous time and space,'' \emph{Neural
  Comput.}, vol.~12, no.~1, pp. 219--245, 2000.

\bibitem{SCC.Heger.ea1994}
M.~Heger, ``Consideration of risk in reinforcement learning,'' in \emph{Proc.
  Int. Conf. Machine. Learn.}\hskip 1em plus 0.5em minus 0.4em\relax Morgan
  Kaufmann, 1994, pp. 105--111.

\bibitem{SCC.Garcia.Fernadez.ea2015}
J.~Garc\'{\i}a and F.~Fern\'{a}ndez, ``A comprehensive survey on safe
  reinforcement learning,'' vol.~16, no.~1, p. 1437–1480, Jan. 2015.

\bibitem{SCC.Mihatsch.Neuneier.ea2002}
\BIBentryALTinterwordspacing
O.~Mihatsch and R.~Neuneier, ``\BIBforeignlanguage{en}{Risk-{Sensitive}
  {Reinforcement} {Learning}},'' \emph{\BIBforeignlanguage{en}{Mach. Learn.}},
  vol.~49, no.~2, pp. 267--290, Nov. 2002.
  \url{https://doi.org/10.1023/A:1017940631555}
\BIBentrySTDinterwordspacing

\bibitem{SCC.Moldovan.Abbeel.ea2012}
T.~M. Moldovan and P.~Abbeel, ``Safe exploration in markov decision
  processes,'' in \emph{Proc. Int. Conf. Mach. Learn.}, Madison, WI, USA, 2012,
  p. 1451–1458.

\bibitem{SCC.Wachi.Sui.ea2020}
\BIBentryALTinterwordspacing
A.~Wachi and Y.~Sui, ``Safe reinforcement learning in constrained {M}arkov
  decision processes,'' in \emph{Proc. Conf. Machine. Learn.}, ser. Proc.
  Machine. Learn. Res., H.~D. III and A.~Singh, Eds., vol. 119.\hskip 1em plus
  0.5em minus 0.4em\relax PMLR, 13--18 Jul 2020, pp. 9797--9806.
  \url{http://proceedings.mlr.press/v119/wachi20a.html}
\BIBentrySTDinterwordspacing

\bibitem{SCC.Meyn.Tweedie1992}
S.~P. Meyn and R.~L. Tweedie, ``Stability of {M}arkovian processes i: criteria
  for discrete-time chains,'' \emph{Adv. Appl. Probab.}, vol.~24, no.~3, pp.
  542--574, 1992.

\bibitem{SCC.Puterman2014}
M.~Puterman, \emph{Markov decision processes: discrete stochastic dynamic
  programming}.\hskip 1em plus 0.5em minus 0.4em\relax John Wiley \& Sons,
  2014.

\bibitem{SCC.Fisac.Akametalu.ea2018}
J.~F. Fisac, A.~K. Akametalu, M.~N. Zeilinger, S.~Kaynama, J.~Gillula, and
  C.~J. Tomlin, ``A general safety framework for learning-based control in
  uncertain robotic systems,'' \emph{IEEE Trans. Autom. Control}, vol.~64,
  no.~7, pp. 2737--2752, 2018.

\bibitem{SCC.Li.Kalabi.ea2018}
Z.~Li, U.~Kalabi{\'c}, and T.~Chu, ``Safe reinforcement learning: Learning with
  supervision using a constraint-admissible set,'' in \emph{Proc. Am. Control
  Conf.}\hskip 1em plus 0.5em minus 0.4em\relax IEEE, 2018, pp. 6390--6395.

\bibitem{SCC.Yutong.Nan.ea2021}
Y.~Li, N.~Li, H.~E. Tseng, A.~Girard, D.~Filev, and I.~Kolmanovsky, ``Safe
  reinforcement learning using robust action governor,'' {arXiv:2102.10643},
  2021.

\bibitem{SCC.Lee.Park.ea2012}
J.~Y. Lee, J.~B. Park, and Y.~H. Choi, ``Integral {Q}-learning and explorized
  policy iteration for adaptive optimal control of continuous-time linear
  systems,'' \emph{Automatica}, vol.~48, no.~11, pp. 2850--2859, Nov. 2012.

\bibitem{SCC.Modares.Lewis2014}
H.~Modares and F.~L. Lewis, ``Optimal tracking control of nonlinear
  partially-unknown constrained-input systems using integral reinforcement
  learning,'' \emph{Automatica}, vol.~50, no.~7, pp. 1780--1792, 2014.

\bibitem{SCC.Yang.Liu.ea2015}
X.~Yang, D.~Liu, Q.~Wei, and D.~Wang, ``Direct adaptive control for a class of
  discrete-time unknown nonaffine nonlinear systems using neural networks,''
  \emph{Int. J. Robust Nonlinear Control}, vol.~25, no.~12, pp. 1844--1861,
  Apr. 2015.

\bibitem{SCC.Wawrzynski2009}
P.~Wawrzy{\'n}ski, ``Real-time reinforcement learning by sequential
  actor-critics and experience replay,'' \emph{Neural Netw.}, vol.~22, no.~10,
  pp. 1484--1497, 2009.

\bibitem{SCC.Zhang.Cui.ea2011}
H.~Zhang, L.~Cui, X.~Zhang, and Y.~Luo, ``Data-driven robust approximate
  optimal tracking control for unknown general nonlinear systems using adaptive
  dynamic programming method,'' \emph{IEEE Trans. Neural Netw.}, vol.~22,
  no.~12, pp. 2226--2236, Dec. 2011.

\bibitem{SCC.Adam.Busoniu.ea2012}
S.~Adam, L.~Busoniu, and R.~Babuska, ``Experience replay for real-time
  reinforcement learning control,'' \emph{IEEE Trans. Syst. Man Cybern. Part C
  Appl. Rev.}, vol.~42, no.~2, pp. 201--212, 2012.

\bibitem{SCC.Luo.Wu.ea2014}
B.~Luo, H.-N. Wu, T.~Huang, and D.~Liu, ``Data-based approximate policy
  iteration for affine nonlinear continuous-time optimal control design,''
  \emph{Automatica}, 2014.

\bibitem{SCC.Kamalapurkar.Dinh.ea2015}
\BIBentryALTinterwordspacing
R.~Kamalapurkar, H.~T. Dinh, S.~Bhasin, and W.~E. Dixon, ``Approximate optimal
  trajectory tracking for continuous-time nonlinear systems,''
  \emph{Automatica}, vol.~51, pp. 40--48, Jan. 2015.
  \url{http://www.sciencedirect.com/science/article/pii/S0005109814004841}
\BIBentrySTDinterwordspacing

\bibitem{SCC.Kamalapurkar.Walters.ea2016a}
\BIBentryALTinterwordspacing
R.~Kamalapurkar, P.~Walters, and W.~E. Dixon, ``Model-based reinforcement
  learning for approximate optimal regulation,'' in \emph{Control of Complex
  Systems: Theory and Applications}, K.~Vamvoudakis and S.~Jagannathan,
  Eds.\hskip 1em plus 0.5em minus 0.4em\relax Butterworth-Heinemann, Aug. 2016,
  pp. 247--273.
  \url{http://www.sciencedirect.com/science/article/pii/B9780128052464000082}
\BIBentrySTDinterwordspacing

\bibitem{SCC.Kamalapurkar.Rosenfeld.ea2016}
\BIBentryALTinterwordspacing
R.~Kamalapurkar, J.~A. Rosenfeld, and W.~E. Dixon, ``Efficient model-based
  reinforcement learning for approximate online optimal control,''
  \emph{Automatica}, vol.~74, pp. 247--258, Dec. 2016.
  \url{http://www.sciencedirect.com/science/article/pii/S0005109816303272}
\BIBentrySTDinterwordspacing

\bibitem{SCC.Kamalapurkar.Andrews.ea2017}
\BIBentryALTinterwordspacing
R.~Kamalapurkar, L.~Andrews, P.~Walters, and W.~E. Dixon, ``Model-based
  reinforcement learning for infinite-horizon approximate optimal tracking,''
  \emph{IEEE Trans. Neural Netw. Learn. Syst.}, vol.~28, no.~3, pp. 753--758,
  Mar. 2017.  \url{http://ieeexplore.ieee.org/document/7398079/}
\BIBentrySTDinterwordspacing

\bibitem{SCC.Kamalapurkar.Klotz.ea2018}
\BIBentryALTinterwordspacing
R.~Kamalapurkar, J.~R. Klotz, P.~Walters, and W.~E. Dixon, ``Model-based
  reinforcement learning in differential graphical games,'' \emph{IEEE Trans.
  Control Netw. Syst.}, vol.~5, no.~1, pp. 423--433, Mar. 2018.
  \url{http://ieeexplore.ieee.org/document/7590053/}
\BIBentrySTDinterwordspacing

\bibitem{SCC.He.Li.ea2017}
W.~{He}, Z.~{Li}, and C.~L.~P. {Chen}, ``A survey of human-centered intelligent
  robots: issues and challenges,'' \emph{IEEE/CAA J. Autom. Sin.}, vol.~4,
  no.~4, pp. 602--609, 2017.

\bibitem{SCC.Howard1960}
R.~A. Howard, \emph{Dynamic programming and {M}arkov processes}.\hskip 1em plus
  0.5em minus 0.4em\relax Cambridge, MA: MIT Press, 1960.

\bibitem{SCC.Walters.Kamalapurkar.ea2015}
\BIBentryALTinterwordspacing
P.~Walters, R.~Kamalapurkar, and W.~E. Dixon, ``Approximate optimal online
  continuous-time path-planner with static obstacle avoidance,'' in \emph{Proc.
  IEEE Conf. Decis. Control}, Osaka, Japan, Dec. 2015, pp. 650--655.
  \url{http://ieeexplore.ieee.org/xpl/articleDetails.jsp?arnumber=7402303}
\BIBentrySTDinterwordspacing

\bibitem{SCC.Deptula.Rosenfeld.ea2018}
\BIBentryALTinterwordspacing
P.~Deptula, J.~A. Rosenfeld, R.~Kamalapurkar, and W.~E. Dixon, ``Approximate
  dynamic programming: combining regional and local state following
  approximations,'' \emph{IEEE Trans. Neural Netw. Learn. Syst.}, vol.~29,
  no.~6, pp. 2154--2166, Mar. 2018.
  \url{https://ieeexplore.ieee.org/document/8318392/}
\BIBentrySTDinterwordspacing

\bibitem{SCC.Deptula.ea2020}
P.~{Deptula}, H.~{Chen}, R.~A. {Licitra}, J.~A. {Rosenfeld}, and W.~E. {Dixon},
  ``Approximate optimal motion planning to avoid unknown moving avoidance
  regions,'' \emph{IEEE Transactions on Robotics}, vol.~36, no.~2, pp.
  414--430, 2020.

\bibitem{SCC.Ames.Xu.ea2017}
A.~D. Ames, X.~Xu, J.~W. Grizzle, and P.~Tabuada, ``Control barrier function
  based quadratic programs for safety critical systems,'' \emph{IEEE Trans.
  Autom. Control}, vol.~62, no.~8, pp. 3861--3876, Aug. 2017.

\bibitem{SCC.Graichen.Petit.ea2009}
\BIBentryALTinterwordspacing
K.~Graichen and N.~Petit, ``Incorporating a class of constraints into the
  dynamics of optimal control problems,'' \emph{Optimal Control Applications
  and Methods}, vol.~30, no.~6, pp. 537--561, 2009.
  \url{https://onlinelibrary.wiley.com/doi/abs/10.1002/oca.880}
\BIBentrySTDinterwordspacing

\bibitem{SCC.Ahnert.Abel.ea2007}
\BIBentryALTinterwordspacing
K.~Ahnert and M.~Abel, ``Numerical differentiation of experimental data: local
  versus global methods,'' \emph{Computer Physics Communications}, vol. 177,
  no.~10, pp. 764--774, 2007.
  \url{https://www.sciencedirect.com/science/article/pii/S0010465507003116}
\BIBentrySTDinterwordspacing

\bibitem{SCC.Ghanes.Barbot.ea2017}
M.~Ghanes, J.~Barbot, L.~Fridman, and A.~Levant, ``A novel differentiator: A
  compromise between super twisting and linear algorithms,'' in \emph{Proc.
  IEEE Conf. Decis. Control}, 2017, pp. 5415--5419.

\bibitem{SCC.Dinh.Kamalapurkar.ea2014}
\BIBentryALTinterwordspacing
H.~T. Dinh, R.~Kamalapurkar, S.~Bhasin, and W.~E. Dixon, ``Dynamic neural
  network-based robust observers for uncertain nonlinear systems,''
  \emph{Neural Netw.}, vol.~60, pp. 44--52, Dec. 2014.
  \url{http://www.sciencedirect.com/science/article/pii/S089360801400166X}
\BIBentrySTDinterwordspacing

\bibitem{SCC.Dabroom.Khalil.ea1997}
A.~Dabroom and H.~Khalil, ``Numerical differentiation using high-gain
  observers,'' in \emph{Proc. IEEE Conf. Decis. Control}, vol.~5, 1997, pp.
  4790--4795 vol.5.

\bibitem{SCC.Chitour.ea2002}
Y.~Chitour, ``Time-varying high-gain observers for numerical differentiation,''
  \emph{IEEE Trans. Autom. Control}, vol.~47, no.~9, pp. 1565--1569, 2002.

\bibitem{SCC.Jaakkola.Singh.ea1995}
T.~Jaakkola, S.~P. Singh, and M.~I. Jordan, ``Reinforcement learning algorithm
  for partially observable {M}arkov decision problems,'' in \emph{Advances in
  Neural Information Processing Systems 7}, G.~Tesauro, D.~S. Touretzky, and
  T.~K. Leen, Eds.\hskip 1em plus 0.5em minus 0.4em\relax Cambridge, MA: MIT
  Press, 1995, pp. 345--352.

\bibitem{SCC.Lewis.Vamvoudakis.ea2011}
F.~L. Lewis and K.~G. Vamvoudakis, ``Reinforcement learning for partially
  observable dynamic processes: Adaptive dynamic programming using measured
  output data,'' \emph{IEEE Trans. Syst. Man Cybern.}, vol.~41, no.~1, pp.
  14--25, 2011.

\bibitem{SCC.Modares.Lewis.ea2014}
H.~Modares, F.~L. Lewis, and M.-B. Naghibi-Sistani, ``Integral reinforcement
  learning and experience replay for adaptive optimal control of
  partially-unknown constrained-input continuous-time systems,''
  \emph{Automatica}, vol.~50, no.~1, pp. 193--202, 2014.

\bibitem{SCC.Omidshafiei.Pazis.ea2017}
\BIBentryALTinterwordspacing
S.~Omidshafiei, J.~Pazis, C.~Amato, J.~P. How, and J.~Vian, ``Deep
  decentralized multi-task multi-agent reinforcement learning under partial
  observability,'' in \emph{Proc. Int. Conf. Mach. Learn.}, ser. Proceedings of
  Machine Learning Research, D.~Precup and Y.~W. Teh, Eds., vol.~70.\hskip 1em
  plus 0.5em minus 0.4em\relax PMLR, 06--11 Aug 2017, pp. 2681--2690.
  \url{http://proceedings.mlr.press/v70/omidshafiei17a.html}
\BIBentrySTDinterwordspacing

\bibitem{SCC.Self.Harlan.ea2019}
\BIBentryALTinterwordspacing
R.~V. Self, M.~Harlan, and R.~Kamalapurkar, ``Model-based reinforcement
  learning for output-feedback optimal control of a class of nonlinear
  systems,'' in \emph{Proc. Am. Control Conf.}, Philadelphia, PA, USA, Jul.
  2019, pp. 2378--2383.  \url{https://ieeexplore.ieee.org/document/8814910}
\BIBentrySTDinterwordspacing

\bibitem{SCC.Sanfelice2021}
R.~G. Sanfelice, \emph{Hybrid Feedback Control}.\hskip 1em plus 0.5em minus
  0.4em\relax Princeton University Press, 2021.

\bibitem{SCC.Yang.Ding.ea2020}
\BIBentryALTinterwordspacing
Y.~Yang, D.-W. Ding, H.~Xiong, Y.~Yin, and D.~C. Wunsch, ``Online
  barrier-actor-critic learning for h{$\infty$} control with full-state
  constraints and input saturation,'' \emph{J. Franklin Inst.}, vol. 357,
  no.~6, pp. 3316 -- 3344, 2020.
  \url{http://www.sciencedirect.com/science/article/pii/S0016003219309044}
\BIBentrySTDinterwordspacing

\bibitem{SCC.Kamalapurkar.Walters.ea2018}
\BIBentryALTinterwordspacing
R.~Kamalapurkar, P.~Walters, J.~A. Rosenfeld, and W.~E. Dixon,
  \emph{Reinforcement learning for optimal feedback control: {A}
  {L}yapunov-based approach}, ser. Communications and Control
  Engineering.\hskip 1em plus 0.5em minus 0.4em\relax Springer International
  Publishing, 2018.  \url{https://www.springer.com/us/book/9783319783833}
\BIBentrySTDinterwordspacing

\bibitem{SCC.Sadegh1993}
N.~Sadegh, ``A perceptron network for functional identification and control of
  nonlinear systems,'' \emph{IEEE Trans. Neural Netw.}, vol.~4, no.~6, pp.
  982--988, 1993.

\bibitem{SCC.Modares.Lewis.ea2013}
H.~Modares, F.~L. Lewis, and M.-B. Naghibi-Sistani, ``Adaptive optimal control
  of unknown constrained-input systems using policy iteration and neural
  networks,'' \emph{IEEE Trans. Neural Netw. Learn. Syst.}, vol.~24, no.~10,
  pp. 1513--1525, 2013.

\bibitem{SCC.Kiumarsi.Lewis.ea2014}
B.~Kiumarsi, F.~L. Lewis, H.~Modares, A.~Karimpour, and M.-B. Naghibi-Sistani,
  ``Reinforcement {Q}-learning for optimal tracking control of linear
  discrete-time systems with unknown dynamics,'' \emph{Automatica}, vol.~50,
  no.~4, pp. 1167--1175, Apr. 2014.

\bibitem{SCC.Kamalapurkar.Walters.ea2016}
\BIBentryALTinterwordspacing
R.~Kamalapurkar, P.~Walters, and W.~E. Dixon, ``Model-based reinforcement
  learning for approximate optimal regulation,'' \emph{Automatica}, vol.~64,
  pp. 94--104, Feb. 2016.
  \url{http://www.sciencedirect.com/science/article/pii/S0005109815004392}
\BIBentrySTDinterwordspacing

\bibitem{SCC.Khalil2002}
H.~K. Khalil, \emph{Nonlinear systems}, 3rd~ed.\hskip 1em plus 0.5em minus
  0.4em\relax Upper Saddle River, NJ: Prentice Hall, 2002.

\bibitem{SCC.GPOPS}
\BIBentryALTinterwordspacing
M.~A. Patterson and A.~V. Rao, ``{GPOPS-II}: A {MATLAB} software for solving
  multiple-phase optimal control problems using hp-adaptive gaussian quadrature
  collocation methods and sparse nonlinear programming,'' \emph{ACM Trans.
  Math. Softw.}, vol.~41, no.~1, Oct. 2014.
  \url{https://doi.org/10.1145/2558904}
\BIBentrySTDinterwordspacing

\bibitem{SCC.Liberzon2012}
D.~Liberzon, \emph{Calculus of variations and optimal control theory: a concise
  introduction}.\hskip 1em plus 0.5em minus 0.4em\relax Princeton University
  Press, 2012.

\end{thebibliography}

\appendix

\section{Appendix}

\subsection{Proof of Lemma \ref{lem:trajectoryRelation}}\label{appendix:lemma1}
\begin{manuallemma}{\ref{lem:trajectoryRelation}}
If $t \mapsto \Phi\big(t,b(x^{0}),\zeta\big)$ is a complete Carath\'{e}odory solution to \eqref{eq:BTDynamics}, starting from the initial condition $b(x^{0})$, under the feedback policy $(s,t) \mapsto \zeta (s,t)$ and $t \mapsto \Lambda(t,x^{0},\xi)$ is a Carath\'{e}odory solution to \eqref{eq:Dynamics}, starting from the initial condition $x^{0}$, under the feedback policy $(x,t) \mapsto \xi(x,t)$, defined as $\xi(x,t) = \zeta(b(x),t)$, then $t \mapsto \Lambda(t,x^{0}, \xi)$ is complete and $ \Lambda(t,x^{0}, \xi) = b^{-1}\left(\Phi(t,b(x^{0}),\zeta)\right) $ for all $t \in \mathbb{R}_{\geq 0}$.
\end{manuallemma}
\begin{proof}
	Note that since $t \mapsto \Phi\big(t,b(x^{0}),\zeta\big)$ is a complete Carath\'{e}odory solution to $ \dot{s} = F(s) + G(s) \zeta(s,t) $, it is differentiable at almost all $ t \in \mathbb{R}_{\geq 0} $. Since $ b^{-1} $ is smooth, $ t \mapsto b^{-1}\left(\Phi\left(t,b(x^{0}),\zeta\right)\right) $ is also differentiable at almost all $ t \in \mathbb{R}_{\geq 0} $. That is,
	\begin{equation}
		\frac{\mathrm{d}(b^{-1}\circ\Phi_i)}{\mathrm{d}t}\left(t,b(x^{0}),\zeta\right) = \frac{\mathrm{d} b^{-1}_{(a_i,A_i)}(y)}{\mathrm{d} y}|_{y=\Phi_{i}\left(t,b\left(x^{0}\right), \zeta\right)} \frac{\mathrm{d} \Phi_{i}}{\mathrm{d} t}\left(t,b(x^{0}),\zeta\right), 
\end{equation}
	for almost all $t \in \mathbb{R}_{\geq 0}$ and all $i=1,\cdots,n$, where $\Phi_i$ denotes the $i$th component of $\Phi$. As a result,
	\begin{align*}
		\frac{\mathrm{d}(b^{-1}\circ\Phi_i)}{\mathrm{d}t}\left(t,b(x^{0}),\zeta\right) =  \frac{\left(F\left(\Phi\left(t,b(x^{0}),\zeta\right)\right)\right)_i}{B_i\left(\Phi_i\left(t,b(x^{0}),\zeta\right)\right)} 
		+ \frac{\left(G\left(\Phi\left(t,b(x^{0}),\zeta \right)\right)\right)_i \zeta\left(\Phi\left(t,b(x^{0}),\zeta\right),t\right)}{B_i\left(\Phi_i\left(t,b(x^{0}),\zeta\right)\right)},
\end{align*}
		for almost all $ t \in \mathbb{R}_{\geq 0}$, and all $i=1,\hdots,n$. By the construction of $ F $, $ G $, and $ \xi $,
	\begin{multline*}
		\frac{\mathrm{d}(b^{-1}\circ\Phi)}{\mathrm{d}t}\left(t,b(x^{0}),\zeta\right)  = f\left(b^{-1}\circ\Phi\left(t,b(x^{0}),\zeta\right)\right)\theta\\+g\left(b^{-1}\circ\Phi\left(t,b(x^{0}),\zeta\right)\right) \xi\left(b^{-1}\circ\Phi\left(t,b(x^{0}),\zeta\right),t\right),
	\end{multline*}
	for almost all $ t \in \mathbb{R}_{\geq 0} $. Clearly $t \mapsto b^{-1}\circ\Phi\left(t,b(x^{0}),\zeta\right)$ is a Carath\'{e}odory solution of \eqref{eq:Dynamics} on $\mathbb{R}_{\geq 0}$, starting from the initial condition $ b^{-1}\big(b(x^{0})\big)= x^{0}$ under the feedback policy $ (x,t)\mapsto \xi(x,t) $. By uniqueness of solutions $ \dot{x} = f(x)\theta + g(x) \xi(x,t) $ (which follows from local Lipschitz continuity of $ f $, $ g $, and $ b $ inside the barrier), $\Lambda(\cdot,x^0,\xi)$ is complete and $\Lambda(t,x^0,\xi) = b^{-1}\left(\Phi\left(t,b(x^{0}),\zeta\right)\right)$ for all $ t \in \mathbb{R}_{\geq 0} $.
\end{proof}
\subsection{Proof of Lemma \ref{lem2}}\label{appendix:lemma2}
\begin{manuallemma}{\ref{lem2}}
If $t \mapsto \Psi\big(t;b(x_{1}(\cdot)),b(\hat{x}^{0}) \big)$ is a Carath\'{e}odory solution to \eqref{eq:BTDynamicsestimated} along the trajectory $x_{1}(\cdot)$, starting from the initial condition $b(\hat{x}^{0})$, and if $t \mapsto \xi(t;x_{1}(\cdot),\hat{x}^{0})$ is a Carath\'{e}odory solution to \eqref{eq:baseEqx}, starting from the initial condition $\hat{x}^{0}$, along the trajectory $x_{1}(\cdot)$, then $ \xi(t;x_{1}(\cdot),\hat{x}^{0}) = b^{-1}\big(\Psi\big(t;b(x_{1}(\cdot)),b(\hat{x}^{0})\big)\big) $ for all $t \in \mathbb{R}_{\geq 0}$.
\end{manuallemma}
\begin{proof}
Since $t \mapsto \Psi\big(t;b(x_{1}(\cdot)),b(\hat{x}^{0}) \big)$ is a Carath\'{e}odory solution to \eqref{eq:BTDynamicsestimated},
it is differentiable at almost all $t$. Since $b^{-1}$ is smooth, $t \mapsto b^{-1}\big(\Psi\big(t;b(x_{1}(\cdot)),b(\hat{x}^{0})\big)\big)$ is also differentiable at almost all $t$. When $b^{-1}\big(\Psi\big(t;b(x_{1}(\cdot)),b(\hat{x}^{0})\big)\big)$ is differentiable,
\begin{equation*}
  \frac{\mathrm{d}}{\mathrm{d}t}\big(b^{-1}\big(\Psi\big(t;b(x_{1}(\cdot)),b(\hat{x}^{0})\big)\big)\big) =\frac{\mathrm{d} \big(b^{-1}\big(\Psi\big(t;b(x_{1}(\cdot)),b(\hat{x}^{0})\big)\big)\big)}{\mathrm{d} s}\frac{\mathrm{d} s}{\mathrm{d} t}. 
\end{equation*}
So, 
\begin{multline*}
  \frac{\mathrm{d}}{\mathrm{d}t}\big(b^{-1}\big(\Psi\big(t;b(x_{1}(\cdot)),b(\hat{x}^{0})\big)\big)\big)=\frac{\mathrm{d}\big(b^{-1}\big(\Psi\big(t;b(x_{1}(\cdot)),b(\hat{x}^{0})\big)\big)\big)}{\mathrm{d} s} \\  \begin{bmatrix}
     H\big(\Psi\big(t;b(x_{1}(\cdot)),b(\hat{x}^{0})\big)\big) \\
    F\big(\Psi\big(t;b(x_{1}(\cdot)),b(\hat{x}^{0})\big)\big) + G\big(\Psi\big(t;b(x_{1}(\cdot)),b(\hat{x}^{0})\big)\big)u(t)  \\ + \nu_{2}\big(\Psi\big(t;b(x_{1}(\cdot)),b(\hat{x}^{0})\big),t\big)
\end{bmatrix}. 
\end{multline*}

By the construction of $H$, $F$, $\nu_{2}$ and $G$, for almost all $t \in \mathbb{R}_{\geq 0}$,
\begin{equation*}
  \frac{\mathrm{d}}{\mathrm{d}t}\big(b^{-1}\big(\Psi\big(t;b(x_{1}(\cdot)),b(\hat{x}^{0})\big)\big)\big) = \begin{bmatrix}
     b^{-1}\big(\Psi_{2}\big(t;b(x_{1}(\cdot)),b(\hat{x}_{2}^{0})\big)\big) \\
    f\big(b^{-1}\big(\Psi\big(t;b(x_{1}(\cdot)),b(\hat{x}^{0})\big)\big)\big) \\ + g\big(b^{-1}\big(\Psi\big(t;b(x_{1}(\cdot)),b(\hat{x}^{0})\big)\big)\big)u(t)  \\ +
    \nu_{1}\big(b^{-1}\big(\Psi\big(t;b(x_{1}(\cdot)),b(\hat{x}^{0})\big),t\big)\big)
\end{bmatrix}.
\end{equation*}
Clearly $t \mapsto b^{-1}\big(\Psi\big(t;b(x_{1}(\cdot)),b(\hat{x}^{0})\big)\big)$ is a Carath\'{e}odory solution to \eqref{eq:baseEqx}, starting from the initial condition $ b^{-1}\big(b(\hat{x}^{0})\big)= \hat{x}^{0}$ along the trajectory $t \mapsto b(x_{1}(t))$. Finally, continuity of $t \mapsto b^{-1}\big(\Psi\big(t;b(x_{1}(\cdot)),b(\hat{x}^{0})\big)\big)$ and $t \mapsto \xi(t;x_{1}(\cdot),\hat{x}^{0})$ implies that  $ b^{-1}\big(\Psi\big(t;b(x_{1}(\cdot)),b(\hat{x}^{0})\big)\big)= \xi(t;x_{1}(\cdot),\hat{x}^{0}) $ for all $t \in \mathbb{R}_{\geq 0}$.
\end{proof}
\subsection{Proof of Lemma \ref{lem3}}\label{appendix:lemma3}
\begin{manuallemma}{\ref{lem3}}
Let $V_{se} : \mathbb{R}^{3n} \rightarrow \mathbb{R}_{\geq 0}$ be a continuously differentiable candidate Lyapunov function defined as
$ V_{se}(Z_{1}) \coloneqq \frac{\alpha^2}{2}\tilde{s}_{1}^T\tilde{s}_{1} + \frac{1}{2}r^Tr + \frac{1}{2}\eta^T\eta$,
where $Z_{1} \coloneqq [\tilde{s}_{1}^T,r^T,\eta^T]$. Provided $s$, $\hat{s} \in \overline{B}(0,\chi)$ for some $\chi > 0$, the orbital derivative of $V_{se}$ along the trajectories of $\dot{\tilde{s}}_{1}$, $\dot{r}$, and $\dot{\eta}$, defined as $\dot{V}_{se}(Z_{1},s,\tilde{s},\tilde{W}_{a}) \coloneqq \frac{\partial V_{se}(Z_{1},s,\tilde{s},\tilde{W}_{a})}{\partial \tilde{s}_{1}}(H(s)-H(\hat{s})) + \frac{\partial V_{se}(Z_{1},s,\tilde{s},\tilde{W}_{a})}{\partial r}\dot{r} + \frac{\partial V_{se}(Z_{1},s,\tilde{s},\tilde{W}_{a})}{\partial \eta}\dot{\eta}$, can be bounded as $\dot{V}_{se}(Z_{1},s,\tilde{s},\tilde{W}_{a}) \leq -{\alpha^3}\|\tilde{s}_{1}\|^{2} - (k-\varpi_{1}\varpi_{4})\|r\|^{2} - (\beta_{1}-\alpha)\|\eta\|^{2} + \\ \varpi_{1}\left(1+\varpi_{4}+\varpi_{4}\alpha\|\right) \|r\|\|\tilde{s}_{1}\|  + \varpi_{1}\varpi_{4} \|r\| \|\eta\|  + \varpi_{2} \|r\| \|\tilde{W}_{a}\|  +  \varpi_{3} \|r\|$.
\end{manuallemma}
\begin{proof}
Using \eqref{eq:BTDynamics}, first state of \eqref{eq:BTDynamicsestimated}, \eqref{eq:eta_final}, and \eqref{eq:r_totalfinal}, the orbital derivative can be expressed as
\begin{align}
\dot{V}_{se} = {\alpha^2}\tilde{s_{1}}^T(r - \alpha \tilde{s_{1}} - \eta) + r^{T}\dot{r} + \eta^{T}(-\beta_{1}\eta-kr-\alpha\dot{\tilde{s}}_1), \label{eq:derivative_Lyapunov_function1}
\end{align}
So, from (\ref{eq:derivative_Lyapunov_function1}),
\begin{multline}
\dot{V}_{se}(Z_{1},s,\tilde{s},\tilde{W}_{a}) = -{\alpha^3}\tilde{s_{1}}^T\tilde{s_{1}} - kr^Tr - (\beta_{1}-\alpha)\eta^T\eta + (r^T\tilde{F}_{2}(s,\hat{s}) + r^T\tilde{F}_{3}(s,\hat{s}) +  r^T \tilde{G}_{1}(s,\hat{s})\hat{u}).
\end{multline}
Using the Cauchy-Schwarz inequality and the fact that $F_{2}$, $F_{3}$, and $G$ are Lipschitz continuous on $\overline{B}(0,\chi)$, we get
\begin{multline}
\dot{V}_{se}(Z_{1},s,\tilde{s},\tilde{W}_{a}) \leq -{\alpha^3}\tilde{s_{1}}^T\tilde{s_{1}} - kr^Tr - (\beta_{1}-\alpha)\eta^T\eta + \varpi_{1} \|r\| \|\tilde{s}\| + \varpi_{2} \|r\| \|\tilde{W}_{a}\| +  \varpi_{3} \|r\|. \label{eq:derivative_Lyapunov_function2}
\end{multline}
Provided $s,\hat{s} \in \overline{B}(0,\chi)$ from \eqref{eq:BTDynamics} and \eqref{eq:BTDynamicsestimated}, 
\begin{align*}
s_2 = b\left(\dot{s}_{1} (\frac{A_{1}a_{1}^{2} - a_{1}A_{1}^{2}}{a_1^{2}e^{s_1}-2a_1A_1+A_{1}^{2}e^{-s_1}})\right) = h(s_1,\dot s_1),
\end{align*}
\begin{align*}
\hat{s}_{2}= b\left(\dot{\hat{s}}_{1} (\frac{A_{1}a_{1}^{2} - a_{1}A_{1}^{2}}{a_1^{2}e^{\hat{s}_1}-2a_1A_1+A_{1}^{2}e^{-\hat{s}_1}})\right) = h(\hat{s}_1,\dot{\hat s}_1),
  s_2 = b\left(\dot{s}_{1} (\frac{1}{B_{1}(s_{1})})\right) = h(s_1,\dot s_1),
\end{align*}
\begin{equation}
\hat{s}_{2}= b\left(\dot{\hat{s}}_{1} (\frac{1}{B_{1}(\hat{s}_{1})})\right) = h(\hat{s}_1,\dot{\hat s}_1),
\quad \text{and} \quad 
    \tilde{s}_{2} = s_{2} - \hat{s}_{2} = h(s_{1},\dot{s}_{1}) - h(\hat{s}_{1},\dot {\hat{s}}_{1}).\label{eq:tilde(s_2)}
\end{equation}
Provided $s,\hat{s} \in \overline{B}(0,\chi)$, Lipschitz continuity of $h$ can be exploited to derive the bound
\begin{multline}
       | h(s_{1},\dot{s}_{1}) - h(\hat{s}_{1},\dot {\hat{s}}_{1})|  \leq | h(s_{1},\dot{s}_{1}) - h(\hat{s}_{1},\dot{s}_{1})| \\ + | h(\hat{s}_{1},\dot{s}_{1}) - h(\hat{s}_{1},\dot {\hat{s}}_{1})| \leq \varpi_{4} \|\tilde{s}_{1}\| + \varpi_{4} \| \dot{\tilde{s}}_{1}\| \\ \leq \varpi_{4} \|\tilde{s}_1 \| + \varpi_{4} \| r-\alpha\tilde{s}_{1}-\eta \|.
\end{multline}
Using the triangle inequality, 
\begin{equation}\label{eq:need_nu1_design}
   \|\tilde{s}\| \leq \|\tilde{s_{1}}\| + \|\tilde{s_{2}}\| \leq(1 + \varpi_{4} + \varpi_{4}\alpha)\|\tilde{s}_{1}\| + \varpi_{4}\|r\| + \varpi_{4}\|\eta\|. 
\end{equation}
substituting \eqref{eq:need_nu1_design} in \eqref{eq:derivative_Lyapunov_function2}, we obtain the desired bound
\begin{multline}
\dot{V}_{se} \leq -{\alpha^3}\|\tilde{s}_{1}\|^{2} - (k-\varpi_{1}\varpi_{4})\|r\|^{2} - (\beta_{1}-\alpha)\|\eta\|^{2} \\  + \varpi_{1}\left(1+\varpi_{4}+\varpi_{4}\alpha\|\right) \|r\|\|\tilde{s}_{1}\|    + \varpi_{1}\varpi_{4} \|r\| \|\eta\| \\ + \varpi_{2} \|r\| \|\tilde{W}_{a}\| +  \varpi_{3} \|r\|.\label{eq:derivative_Lyapunov_function2e}
\end{multline}
\end{proof}
\subsection{Proof of Theorem \ref{thm:Optimal GAS}}\label{appendix:thm:Optimal GAS}
\begin{manualtheorem}{\ref{thm:Optimal GAS}}
	If the optimal state feedback controller \eqref{eq:optimalcontrol} that minimizes the cost function in (cost function) exists and if the corresponding optimal value function is continuously differentiable and radially unbounded, then the origin of closed-loop system 
	\begin{align}
	\dot{s}_{1}  = H(s),\quad
	\dot{s}_{2}  =F({s}) + G({s})u^{*}(s)\label{eq:Closed-loop System}
	\end{align} is globally asymptotically stable.
\end{manualtheorem}
\begin{proof}
Under the hypothesis of Theorem \ref{thm:Optimal GAS}, the optimal value function satisfies the Hamilton-Jacobi-Bellman equation \cite[Chapter 5]{SCC.Liberzon2012} 
\begin{equation}
    V^{*}_{s_{1}}\left(s\right)H(s_{1},s_{2})+V^{*}_{s_{2}}\left(s\right)\left(F\left(s\right)+G\left(s\right)u^{*}\left(s\right)\right)+c\left(s,u^{*}\left(s\right)\right)=0,\label{eq:HJB}
\end{equation}
with\begin{equation}
    u^{*}(s) := -\frac{1}{2}R^{-1}G(s)^{T}(\nabla_{s_{2}}V^{*}(s))^{T},
\end{equation}
Since the solutions of \eqref{eq:Closed-loop System} are continuous and the function $ V^{*} $ is positive semidefinite by definition, if $ V^{*}\left(\begin{bmatrix}
s_{1}\\s_{2}
\end{bmatrix}\right) = 0$ for some $ s \neq 0 $, it can be concluded that $ Q\left(\phi\left(t,s,u^{*}\left(\cdot\right)\right)\right) = 0, \forall t\geq 0$, and $ u^{*}\left(\phi\left(t,s,u^{*}\left(\cdot\right)\right)\right)= 0, \forall t\geq 0 $. If Assumption \ref{ass:CostRestrictions}-(a) holds then $ \phi\left(t,s,u^{*}\left(\cdot\right)\right)=0,\forall t\geq 0$, which contradicts $ s\neq 0 $. If Assumption \ref{ass:CostRestrictions}-(b) holds, then $ s_{1}\left(t,s,u^{*}\left(\cdot\right)\right)=0,\forall t\geq 0 $. As a result, $ \phi\left(t,s,u^{*}\left(\cdot\right)\right)=0,\forall t\geq 0$, which contradicts $ s\neq 0 $. If Assumption \ref{ass:CostRestrictions}-(c) holds, then $ s_{2}\left(t,s,u^{*}\left(\cdot\right)\right)=0,\forall t\geq 0 $. As a result, $ s_{1}\left(t,s,u^{*}\left(\cdot\right)\right)=c_{2},\forall t\geq 0 $ for some constant $ c_{2}\in\mathbb{R}^{n} $. Since $ F\left(s\right)\neq 0 $ if $ s_{1}\neq 0 $, it can be concluded that $ c_{2}=0 $, which contradicts $ s\neq 0 $. Hence, $ V^{*}\left(s\right) $ cannot be zero for a nonzero $ s $. Furthermore, since $ F\left(0\right)=0 $, the zero controller is clearly the optimal controller starting from $ s=0 $. That is, $ V^{*}\left(0\right)=0 $, and as a result, $ V^{*}:\mathbb{R}^{2n}\to\mathbb{R} $ is PD.

Using $ V^{*} $ as a candidate Lyapunov function and using the HJB equation in \eqref{eq:HJB}, it can be concluded that\begin{align}
V^{*}_{s_{1}}\left(s\right)H(s)+V^{*}_{s_{2}}\left(s\right)\left(F\left(s\right)+G\left(s\right)u^{*}\left(s\right)\right)\leq-Q\left(s\right),
\end{align}$ \forall s\in\mathbb{R}^{2n}. $ If Assumption \ref{ass:CostRestrictions}-(a) holds, then the proof is complete using Lyapunov's direct method. If Assumption \ref{ass:CostRestrictions}-(b) holds, then using the fact that if the output is identically zero then so is the state, the invariance principle \cite[Corollary 4.2]{SCC.Khalil2002} can be invoked to complete the proof. If Assumption \ref{ass:CostRestrictions}-(c) holds, then finiteness of the value function everywhere implies that the origin is the only equilibrium point of the closed-loop system. As a result, the invariance principle can be invoked to complete the proof.
\end{proof}

\end{document}